\newtheorem{theorem}{Theorem}[section]
\newtheorem{lemma}[theorem]{Lemma}
\newtheorem{claim}[theorem]{Claim}
\newtheorem{definition}[theorem]{Definition}
\DeclareMathOperator{\argmax}{argmax}
\DeclareMathOperator{\argmin}{argmin}
\newcommand{\rbr}[1]{\left(#1\right)}
\newcommand{\lbr}[1]{\left\{#1\right\}}
\newcommand{\given}{\,|\,}
\newcommand{\prob}[2][]{\text{\bf Pr}\ifthenelse{\not\equal{}{#1}}{_{#1}}{}\!\left[{\def\givenn{\middle|}#2}\right]}
\newcommand{\expect}[2][]{\text{\bf E}\ifthenelse{\not\equal{}{#1}}{_{#1}}{}\!\left[{\def\givenn{\middle|}#2}\right]}
\newcommand{\expecto}[2][]{\text{\bf E}\ifthenelse{\not\equal{}{#1}}{^{#1}}{}\!\left[{\def\givenn{\middle|}#2}\right]}
\newcommand{\tparen}{\big}
\newcommand{\tprob}[2][]{\text{\bf Pr}\ifthenelse{\not\equal{}{#1}}{_{#1}}{}\tparen[{\def\given{\tparen|}#2}\tparen]}
\newcommand{\texpect}[2][]{\text{\bf E}\ifthenelse{\not\equal{}{#1}}{_{#1}}{}\tparen[{\def\given{\tparen|}#2}\tparen]}
\newcommand{\sprob}[2][]{\text{\bf Pr}\ifthenelse{\not\equal{}{#1}}{_{#1}}{}[#2]}
\newcommand{\sexpect}[2][]{\text{\bf E}\ifthenelse{\not\equal{}{#1}}{_{#1}}{}[#2]}
\newcommand{\abs}[1]{\left|#1\right|}
\newcommand{\indicate}[1]{{\bf 1}\left[#1\right]}
\newcommand{\cost}{c}
\newcommand{\score}{S}
\newcommand{\prior}{D}
\newcommand{\mos}{{\rm max-over-separate }}
\newcommand{\opt}{{\text{\rm IC-OPT}}}
\newcommand{\vopt}{{\text{\rm ALG-OPT}}}
\newcommand{\distoverposterior}{f}
\newcommand{\posterior}{\mu}
\newcommand{\outcome}{\omega}
\newcommand{\outcomes}{\Omega}
\newcommand{\val}{v}
\newcommand{\signal}{\sigma}
\newcommand{\signals}{\Sigma}
\newcommand{\threshold}{\eta}
\newcommand{\effortset}{\Psi}
\newcommand{\optset}{\effortset^*}
\newcommand{\event}{\mathcal{E}}
\newcommand{\strategy}{\tau}
\newcommand{\divergence}{\Lambda}
\newcommand{\KL}{{\rm KL}}
\newcommand{\subsetsum}{{\mathcal Z}}
\newcommand{\stopping}{\nparallel}
\newcommand{\groundset}{G}
\title{Optimal Scoring Rules for Multi-dimensional Effort\thanks{The order of the authors are certified random. The records are available in \url{https://www.aeaweb.org/journals/policies/random-author-order/search?RandomAuthorsSearch\%5Bsearch\%5D=ZyHlOb3fJMgs}.}}
\author{Jason D. Hartline\thanks{Department of Computer Science, Northwestern University.
Email: \texttt{hartline@northwestern.edu}.}
\and Liren Shan\thanks{Department of Computer Science, Northwestern University.
Email: \texttt{lirenshan2023@u.northwestern.edu}.}
\and Yingkai Li\thanks{Cowles Foundation for Research in Economics, Yale University.
Email: \texttt{yingkai.li@yale.edu}.}
\and Yifan Wu\thanks{Department of Computer Science, Northwestern University.
Email: \texttt{yifan.wu@u.northwestern.edu}.}}
\date{}
\begin{document}

\maketitle
\begin{abstract}
  This paper develops a framework for the design of scoring rules to
  optimally incentivize an agent to exert a multi-dimensional effort.
  This framework is a generalization to strategic agents of the
  classical knapsack problem \citep*[cf.][]{BKV-05,sin-10} and it is
  foundational to applying algorithmic mechanism design to the
  classroom.  The paper identifies two simple families of scoring
  rules that guarantee constant approximations to the optimal scoring
  rule.  The truncated separate scoring rule is the sum of single
  dimensional scoring rules that is truncated to the bounded range of
  feasible scores.  The threshold scoring rule gives the maximum score
  if reports exceed a threshold and zero otherwise.  Approximate
  optimality of one or the other of these rules is similar to the
  bundling or selling separately result of \citet*{BILW-14}.
  Finally, we show that the approximate optimality of the best of those two simple scoring rules is robust when the agent's choice of effort is made sequentially. 
\end{abstract}


\section{Introduction}
\label{s:intro}

\reversemarginpar
\newcommand{\topic}[1]{\marginpar{\tiny #1}}
\renewcommand{\topic}[1]{{}}

\topic{Mechanism design for the classroom} This paper considers
mechanism design for the classroom.  An instructor aims to design a
grading mechanism that incentivizes learning, learning comes
from costly effort on the part of a student, and the student aims to
optimize their grade less the costs of effort.  Two key aspects of
this model for mechanism design are that effort is multi-dimensional
over a set of assigned tasks and that effort may lead to only partial
understanding of each task, i.e., effort does not generally guarantee
the student gets an answer correct.  The paper formulates this
problem as a multi-dimensional strategic version of the knapsack
problem and solves it by giving a simple and computationally efficient
scoring rule that incentivizes effort on an approximately optimal set
of tasks.

\topic{connections to agt} Strategic versions of the knapsack problem
and multi-dimensional mechanism design are of central interest in
algorithmic mechanism design.  For example, classic models describe
knapsack mechanisms for allocation \citep*[e.g.,][]{BKV-05} and for
procurement \citep*[e.g.,][]{sin-10}.  An important new frontier for
algorithmic mechanism design is in incentivizing private effort, e.g.,
to impact states as in contract theory
\citep*{dutting2022combinatorial}, or to collect information as in scoring
rules (this paper).  Optimization of scoring rules for
single-dimensional effort was considered by \citep*{HLSW-20}.  This
paper considers multi-dimensional effort where key steps in the
analysis resemble those of the well studied
bundling-or-selling-separately result of the multi-dimensional
mechanism design literature \citep*{BILW-14,BILW-20}.

\topic{Dialog between theory and practice in the classroom} Mechanism
design for the classroom has the potential to address a key challenge
for the two decade old field of algorithmic mechanism design.  To test
the theories of mechanism design in practice, the mechanisms must be
run in practice.  Unlike in classical algorithm design, where new
algorithms can be empirically evaluated on canonical data sets;
empirical validation of mechanisms fundamentally requires that their
inputs be from agents that are strategically responding to (other
agents and) the new mechanism.  Researchers of algorithmic mechanism
design do not generally have opportunities to test the classical
models of allocation or procurement.  Due to this challenge most
mechanisms of the algorithmic mechanism design literature have never
been empirically tested.  The classroom applications of mechanism
design, as proposed by this paper, provide immediate opportunities for
a dialogue between theory and practice; and their advances can lead to
better learning outcomes for students.  For example, \citet*{HLSW-20}
motivate their work on optimizing scoring rules for single-dimensional
effort by an empirical failure of the classical quadratic scoring rule
to provide sufficient incentives of effort for peer grading.

\topic{Knapsack Scoring} The {\em knapsack scoring problem} formulated
and solved in this paper is as follows.  There is a universe of tasks
that an instructor could assign to a student.  Effort of the student
on each task is binary.  Each task has a fixed learning value and a
fixed cost of effort.  The instructor aims to maximize the sum of
values of the tasks that the student puts effort on.  If effort were
directly observable, then this problem would be identical to the
knapsack problem: the optimal set of tasks to assign is the solution
to the knapsack problem with knapsack capacity equal to the maximum
grade and the student receives this maximum grade if effort is exerted
on all of the assigned tasks (zero otherwise).  Our instructor cannot
directly observe effort, but can instead administer a binary test for
each task where the student's belief about the answer to the test
improves with effort.  The instructor aims to select the set of tasks
that the student should perform and design a scoring rule with bounded
total score that incentivizes the student to perform these tasks.

\topic{Intuition from results} How does the instructor select the
tasks?  And how should the instructor score the student in aggregate?
The paper shows that there are two main cases that must be
considered.  Consider the case that scores from individual scoring
rules for the optimal set of tasks concentrate, e.g., because the
student is successful at many of them.  In this case then a good set
of tasks to incentivize can be found by greedily selecting tasks by
the ratio of value to cost and a {\em truncated separate scoring rule} can
incentivize effort on these tasks.  If the scores do not concentrate
then approximately optimal effort can be incentivized by the {\em threshold
scoring rule} and the tasks for this scoring rule can be identified by
greedily selecting tasks by the ratio of value to probability that the
student's effort is informative.
This observation is robust whether the agent exerts effort simultaneously or sequentially.


\paragraph{Related Work}


Prior work has considered mechanism design problems based on strategic
versions of the knapsack problem.  One framing is that of
single-minded multi-unit demand agents as buyers with a seller with a
multi-unit supply constraint.  In this model, only the values of the
agents can be strategically manipulated.  \citet*{BKV-05} considered
welfare maximization with this framing and gave a general method for
converting polynomial time approximation schemes (including the one
for knapsack) into incentive compatible mechanisms (with the same
approximation guarantees).  \citet*{AH-06} considers the same framing
with the goal of revenue maximization and a natural prior-free
benchmark.

Another knapsack framing reverses the buyer and seller roles: The
agents are sellers with private costs (object sizes in knapsack) and
the buyer aims to hire a team (set of sellers) to maximize value
but has a budget constraint (capacity of the knapsack).
\citet*{sin-10} posed this question and gave prior-free approximation
mechanisms when the buyers value function is submodular (generalizing
the linear value function of the traditional knapsack problem).
\citet*{BCGL-12,BCGL-17} considered the budget-feasibility question in
the Bayesian and prior-independent models of mechanism design and give constant approximations.
\citet*{BH-16} consider the Bayesian budget feasibility problem and
showed that posted pricing mechanisms give good approximation to the
Bayesian optimal mechanism.  In comparison to the literature on budget
feasibility, this paper's model of scoring rule optimization has a
single agent (resp.\ multiple agents) with a multi-dimensional strategy
space (resp.\ single-dimensional), the costs are public
(resp. private), but effort is private (resp.\ public).  With private
effort, the principal optimizing a scoring rule can only validate the
agent's effort in so far as the agent's posterior information from
effort improves over her prior information.

Multi-dimensional mechanism design problems are notoriously difficult.
In the classical setting of selling multiple items to a single agent
with multi-dimensional preferences, the algorithmic mechanism design
literature has identified simple constant-approximation mechanisms in a
number of canonical settings.  \citet*{BILW-14,BILW-20} show that for
an agent with independent additive values for multiple items then the
better of bundling or selling separately is a constant approximation.
\citet*{RW-15,RW-18} extend this approximation result to agents with
subadditive valuations.  See \citet*{BILW-20} for discussion of the
extensive literature generalizing these results.  These bundling
versus selling separately results are paralleled by this paper's
result showing that the better of truncated separate scoring or
threshold scoring is a constant approximation.

\cite{CB-16} consider a setting where a principal selects signal structures with knapsack constraints on the set of realizable signals. They show that when signals are substitutes, there exists a constant approximation algorithm for signal selection. However, in the general case, no algorithm can achieve a constant approximation with subexponentially many queries to the value of a signal. In our paper, we focus on the setting where the value function is submodular, which can be seen as a special case of substitutional signals if the agent's incentive is ignored. However, we prove that when the principal faces the task of designing an incentive scheme for the agent to select the set of signals, finding the optimal solution is NP-hard.


This work builds on the general framework for optimizing scoring rules
for effort that was initiated by \citet*{HLSW-20}.  Their main result
considers binary effort and multi-dimensional state.  In contrast, the
model of this paper is for multi-dimensional effort and
multi-dimensional state, but with a 1-to-1 correspondence between the
dimension of effort and state.


\citet*{CY-21} consider the design of scoring rules for maximizing a
binary effort in a max-min design framework.  For example,
complementing a prior-independent result from \citet*{HLSW-20}, they
show that the quadratic scoring rule is max-min optimal over a large
family of distributional settings.  \citet*{kon-22} apply the framework
of effort-maximization to multi-agent peer prediction where the
principal does not have access to the ground truth state and instead
must compare reports across several agents.

Several papers look at optimizing for multiple levels of a
single-dimensional effort with the objective of accuracy of the
forecast (i.e., the posterior from effort which is reported in a
proper scoring rule).  \citet*{osb-89} considers optimization of
quadratic scoring rules with a continuous level of
effort.  \citet*{Z-11} characterizes the optimal single-dimensional
scoring rule when the states are partially
verifiable.  \citet*{NNW-21} consider optimization of scoring rules for
integral levels of effort where the effort corresponds to a number of
costly samples drawn. \citet*{MB-22} characterizes the optimal scoring rule that maximizes revenue subject to a information cost,  with limited liability constraint. 

Optimization of effort in scoring rules has similarities to the
problem of optimizing effort in contracts, the main difference being
that, in the classical model of contract design the distribution over
states for each action is common knowledge.  In contract for scoring
rules, on taking an action the agent receives a signal that gives the
agent private information about the distribution of states. For the
contract design problems, \citet*{castiglioni2022designing} show that
the optimal contract can be computed in time polynomial in the number
of potential actions of the agent even when the costs of actions are
private information.  For the multi-dimensional effort model, the
number of actions is exponential in the size of the dimensions,
and \citet*{dutting2022combinatorial} show that with binary states, the
optimal contract can be computed in polynomial time if the function
mapping the action choices to the state distributions satisfies the
gross substitutes property, but is NP-hard when the function is more
generally submodular.

\paragraph{Future Directions}
The approach of the paper is one of Bayesian mechanism design where
the prior distribution is known to both the principal (instructor) and
agent (student).  Within the Bayesian model there are three main
directions for future work.  First, the positive results of this paper
are restricted to simplistic distributions over posteriors.  As
discussed in \Cref{sec:general info}, generalizing the results beyond this case
necessitates better upper bounds and richer families of approximation
mechanisms.  Second, our multi-dimensional effort-to-state mapping is
one-to-one.  It is an open direction to combine results for
multi-dimensional effort with the model of \citet*{HLSW-20} for
single-dimensional effort with multi-dimensional state.  Third, for
our motivating application in the classroom, the cost of effort varies
across students. It is an open direction to combine our model for
optimizing scoring rules with the model of budget feasibility where
the cost of effort is private.

Bayesian mechanism design is the first model in which to consider
novel mechanism design problems.  To obtain practical mechanisms,
however, it is important to consider robust versions of the problem.
The two canonical frameworks are that of prior-independence and sample
complexity.  Prior-independent framework looks to identify one
mechanism that has the best approximation to the Bayesian optimal
mechanism in worst case over distributions.  The sample complexity
framework looks to bound the number of samples necessary to obtain a
$1+\epsilon$ approximation to the Bayesian optimal mechanism.
\citet*{HLSW-20}, for example, gave such results for the problem of
designing scoring rules for a single-dimensional effort.  These are
open directions for optimizing multi-dimensional effort via scoring
rules.

\section{Preliminaries}
\label{sec:prelim}

This paper considers the problem of incentivizing effort from an agent to learn about an unknown state. 
There are $n$ tasks with state space $\outcomes = \times_{i=1}^n \outcomes_i$ where $\outcomes_i = \{0,1\}$.
For each task $i\in [n]$, 
state $\outcome_i\in\outcomes_i=\{0, 1\}$ is realized independently according to prior distribution $\prior$ which is the uniform distribution on $\outcomes_i$.
Exerting effort on task $i$ induces cost $\cost_i$ to the agent.
The agent can choose to exert effort on a set $\effortset\subseteq [n]$ of tasks at a cost $\sum_{i\in\effortset}\cost_i$.
Let $\signals$ be the signal space where $\bot\in\signals$ is an uninformative signal.
If the agent does not exert effort on task $i$, i.e.\ $i\notin \effortset$, 
with probability~$1$,
the agent receives an uninformative signal $\signal_i=\bot$ regardless of the realized state. 
If the agent exerts effort on task $i$, i.e.\ $i\in \effortset$, the agent receives a signal $\signal_i\in\signals$ according to a signal structure, 
 which is a random mapping from the states to the signal space.
Note that the signal structure on task $i$ induces a distribution 
$\distoverposterior_i$ over posterior $\posterior_i\in \Delta(\outcomes)$.

A special case that is of particular interest for our paper is when 
$\signals=\{0,1,\bot\}^n$
and the posterior belief is supported on $\{0,1,\sfrac{1}{2}\}^n$. 
In this case, if the agent exerts effort on task $i$, i.e.\ $i\in \effortset$,
with probability $p_i$,
the agent receives an informative signal $\signal_i = \outcome_i$,
and with probability $1-p_i$, the agent receives an uninformative signal $\signal_i=\bot$ regardless of the realized state. We call $p_i$ the state revelation probability of each task $i$. 
In the main body of the paper, we will focus on this special model, and discuss the extensions to general information structures in \cref{sec:general info}. 

Given the set of tasks $\effortset$ that the agent exerts effort on, 
the value of the principal is $\val(\effortset)$.
We assume that the valuation function $\val$ is \emph{submodular}: for every $\effortset'\subseteq \effortset\subseteq [n]$ of assignments, the principal's marginal value decreases, i.e.\ 
\begin{equation*}
    \forall i\in [n]\setminus \effortset,\, \val(\effortset'\cup \{i\})-\val(\effortset')\geq \val(\effortset\cup\{i\})-\val(\effortset).
\end{equation*} 
A special case of the submodular valuation is additive valuation, 
where $\val(\effortset) = \sum_{i\in\effortset} \val_i$
for given profile of $\{\val_i\}_{i\in[n]}$. 
The goal of the principal is to design a mechanism that maximizes her value subject to the budget constraint,
i.e., the payment to the agent is bounded between $0$ and $1$.
Note that if the effort choice of the agent can be observed by the principal, this problem reduces to the classical knapsack problem. 
The novel feature in our model is that effort is unobservable, and the principal can only score the agent according to the reported signals and realized states. 

\subsection{Static Effort Model}
In the static effort model, we assume that the agent makes the effort choice on all tasks simultaneously, 
and after the effort choice, 
the agent receives the signals on all tasks simultaneously.
By the revelation principle, 
it is without loss to restrict attention to mechanisms that 
recommend a set of tasks $\effortset$ for the agent to exert effort,
and after exerting effort,
incentivize the agent to truthfully report the received signal to the principal. 
Let $\prob[\sigma\sim\effortset]{\cdot}$ and $\expect[\sigma\sim\effortset]{\cdot}$ 
be the probability and expectation with respect to the distribution over signals conditional on exerting effort on set $\effortset$,
and let $\prob[\omega\sim\signal]{\cdot}$ and $\expect[\omega\sim\signal]{\cdot}$ 
be the probability and expectation with respect to the posterior belief of the agent conditional on receiving signal $\signal\in\signals$.
\begin{definition}\label{def:proper}
A scoring rule
$\score: \signals\times\outcomes\to [0,1]$ is \emph{proper} if 
for any $\signal,\signal'\in\signals$, 
\begin{align*}
    \expect[\omega\sim\signal]{\score(\signal,\outcome)} \geq \expect[\omega\sim\signal]{\score(\signal',\outcome)}.
\end{align*}
\end{definition}

Note that our definition of properness relies on the information structure and the set of signal realizations $\signals$. 
In principle, a scoring rule that satisfies our definition of properness may incentive the agent to misreport his belief that cannot be induced by those signal realizations. 
This may raise a concern for the robustness of the implemented scoring rule.
In \cref{apx:proper belief}, we show that it is without loss of generality to focus on scoring rules that are only proper for signal realizations in the support, 
by converting any such scoring rule to one that is 
proper for all possible beliefs without performance loss. 


\begin{definition}\label{def:ic}
A mechanism composed by a scoring rule
$\score: \signals\times\outcomes\to [0,1]$
and a corresponding recommendation set $\effortset$ is \emph{incentive compatible} 
if $\score$ is proper and 
for any $\effortset'\subseteq [n]$,\footnote{An alternative formulation of the mechanism is to only specify the scoring rule and delegate the computation of the optimal effort choice to the agent. 
However, the computation of the optimal effort choice may be NP-hard. 
The main advantage of our formulation is that we can ensure that the computation of the agent is simple. }
\begin{align*}
    \expect[\sigma\sim\effortset]{\expect[\omega\sim\signal]{\score(\signal,\outcome)}} - \sum_{i\in\effortset} \cost_i
    \geq \expect[\sigma\sim\effortset']{\expect[\omega\sim\signal]{\score(\signal,\outcome)}} - \sum_{i\in\effortset'} \cost_i.
\end{align*}
\end{definition}
\noindent The reward of the agent should be non-negative and the principal has a budget of $1$ for rewarding the agent. Thus, the score is ex-post bounded in $[0,1]$. 
Given the incentive constraints and reward constraints, the timeline of our model is as follows:
\begin{enumerate}
\item The principal commits to an incentive compatible mechanism with scoring rule
$\score: \signals\times\outcomes\to [0,1]$
and recommendation set $\effortset$.
\item The agent chooses a set $\bar{\effortset}$
of tasks on which to exert effort
and pays cost $\sum_{i\in \bar{\effortset}} \cost_i$.
\item States $\outcome=\{\outcome_i\}_{i=1}^n$ are realized, 
and the agent receives the signals $\signal\in\signals$.
\item The agent reports $\signal'$
and receives score $\score(\signal', \outcome)$.
\item The principal receives utility $\val(\bar{\effortset})$.
\end{enumerate}
Note that the agent is incentivized to choose $\bar{\effortset} = \effortset$ and truthfully reveal the signals in an incentive compatible mechanism. 
The \emph{knapsack scoring problem} for value function $\val$, costs $\{\cost_i\}_{i=1}^n$ and state revelation probabilities $\{p_i\}_{i=1}^n$ is formally defined as the following optimization program:
\begin{align*}
\opt(\val,\{\cost_i\}_{i=1}^n,\{p_i\}_{i=1}^n) =& \max_{\score,\effortset} \quad \val (\effortset)\\ 
&\quad\text{s.t.} \quad
(\score,\effortset) \text{ is incentive compatible for $\{\cost_i\}_{i=1}^n$ and $\{p_i\}_{i=1}^n$,} \\
&\qquad \quad\;\score(\signal,\outcome)\in[0,1], \quad \forall \signal,\outcome.\\
\intertext{We use the \emph{knapsack problem} for value function $\val$ and costs $\cost_i$ without incentive constraints as an upper bound on the knapsack scoring problem:}
\vopt(\val,\{\cost_i\}_{i=1}^n) =& \max_{\effortset\subseteq[n]} \quad \val (\effortset)\\ 
&\quad\text{s.t.} \quad \sum_{i\in \effortset} \cost_i \leq 1.
\end{align*}
It is easy to see that $\vopt(\val,\{\cost_i\}_{i=1}^n)\geq \opt(\val,\{\cost_i\}_{i=1}^n,\{p_i\}_{i=1}^n)$ for any $\val,\{\cost_i\}_{i=1}^n$ and $\{p_i\}_{i=1}^n$. 

The following characterization shows the budget-minimal scoring rule for incentivizing a single task. To minimize the budget, the agent is indifferent between: 1) reporting truthfully and non-truthfully; 2) exerting effort and not exerting the effort on the task. 
\begin{lemma}[\citealp*{HLSW-20}]
\label{lem:single opt}
With minimal budget $\frac{2\cost_i}{p_i}$, the agent can be incentivized to exert effort on a single task $\effortset = \{i\}$ with cost $\cost_i$ and probability $p_i$ of revealing. 
Moreover, the budget-minimal scoring rule for incentivizing effort is\footnote{By \cref{claim:proper belief}, $\frac{2\cost_i}{p_i}$ is also the minimum budget required for any scoring rule proper for belief elicitation
in order to incentivize the agent to exert effort on single task $\{i\}$.} 
\begin{align*}
\score_i(\signal_i,\outcome_i) = 
\begin{cases}
\frac{\cost_i}{p_i} & \signal = \bot\\
\frac{2\cost_i}{p_i}\indicate{\signal_i=\outcome_i} & \text{otherwise}.
\end{cases}
\end{align*}
\end{lemma}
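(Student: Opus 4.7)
The plan is to reduce the question to a small linear program on the six values of the scoring rule, exhibit the proposed rule as feasible with budget $2\cost_i/p_i$, and then derive a matching lower bound directly from properness and the effort IC constraint.

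First I would set up notation. Since the prior is uniform on $\{0,1\}$ and effort produces one of three posteriors—$\delta_0$, $\delta_1$, or the uniform posterior (when $\signal_i=\bot$, either because effort failed to reveal or because no effort was exerted)—the relevant entries of the scoring rule are $a=\score_i(0,0)$, $b=\score_i(1,1)$, $e=\score_i(\bot,0)$, $f=\score_i(\bot,1)$, together with the two off-diagonal entries $\score_i(0,1)$ and $\score_i(1,0)$ which must be nonnegative but otherwise only enter through properness. By properness at $\delta_0,\delta_1$, truthful reporting after informative signals yields $a$ and $b$ respectively; by properness at the uniform posterior, reporting $\bot$ after receiving $\bot$ is (weakly) optimal, giving the agent value $\tfrac{e+f}{2}$. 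Consequently, the utility difference between exerting and not exerting effort equals $\tfrac{p_i}{2}(a+b)+(1-p_i)\tfrac{e+f}{2} - c_i - \tfrac{e+f}{2}$, so the effort IC reduces to
\begin{equation*}
a+b - (e+f) \;\ge\; \tfrac{2\cost_i}{p_i}.
\end{equation*}

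Second, I would verify that the proposed rule is feasible: $a=b=2\cost_i/p_i$, $e=f=\cost_i/p_i$, and the off-diagonals equal to $0$. Properness at $\delta_0$ and $\delta_1$ is immediate, and at the uniform posterior reporting $\bot$ gives $\cost_i/p_i$ while reporting either $0$ or $1$ also gives $\cost_i/p_i$, so properness holds with ties. The effort IC above holds with equality, and the max entry is $2\cost_i/p_i$, matching the claimed budget.

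Third, I would prove optimality. Properness at the uniform posterior against alternative reports $0$ and $1$ gives $e+f \ge a + \score_i(0,1) \ge a$ and $e+f \ge b + \score_i(1,0) \ge b$ since the off-diagonals are nonnegative; summing yields $e+f \ge \tfrac{a+b}{2}$. Substituting into the IC inequality gives $\tfrac{a+b}{2} \ge a+b - (e+f) \ge \tfrac{2\cost_i}{p_i}$, hence $\max(a,b) \ge \tfrac{a+b}{2} \ge 2\cost_i/p_i$, so any feasible rule uses budget at least $2\cost_i/p_i$.

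The main subtlety, and the step to be careful about, is the use of properness at the \emph{uniform} posterior. A naive derivation of the IC constraint might only compare effort to the strategy of reporting $\bot$ without effort, but to get the tight lower bound we need that reporting $0$ or $1$ under the uniform posterior is no better than reporting $\bot$: this is exactly what forces $e+f$ to be large relative to $a$ and $b$, and without it one could try to drive $e+f\to 0$ and shrink the budget. The rest of the argument is a straightforward combination of inequalities.
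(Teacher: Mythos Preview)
The paper does not supply its own proof of this lemma; it is imported from \citet*{HLSW-20} and simply stated. There is therefore nothing in the paper to compare your argument against.

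That said, your argument is correct and efficient. The reduction to the six scoring-rule entries, the derivation of the effort IC as $a+b-(e+f)\ge 2\cost_i/p_i$, and the lower bound via properness at the uniform posterior (yielding $e+f\ge \max(a,b)\ge (a+b)/2$) all check out. Your identification of the key step is also right: without the properness constraint at $\bot$ against the informative reports, nothing pins down $e+f$ from below, and the lower bound would collapse. One cosmetic point: you conclude $\max(a,b)\ge 2\cost_i/p_i$, which lower-bounds the budget since the budget is the maximum over \emph{all} entries; you might remark that the remaining entries ($e,f$ and the off-diagonals) need not be bounded directly because the inequality on $\max(a,b)$ already suffices.
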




By \Cref{lem:single opt}, with budget $1$, the agent can be incentivized to exert effort on a single task if and only if $\frac{2\cost_i}{p_i}\leq 1$.
\cref{lem:monotone task} shows that for multiple tasks there is a monotonicity property for the set of incentivizable tasks.
\begin{lemma}[Monotonicity in tasks]~\label{lem:monotone task}
For any set of assignments $\effortset\subseteq [n]$, 
if there exists a proper scoring rule~$\score$ 
such that the agent exerts effort on tasks $\effortset$, 
for any subset $\effortset'\subseteq\effortset$, 
there exists a proper scoring rule $\score'$ 
such that the agent exerts effort on tasks $i\in\effortset'$.
\end{lemma}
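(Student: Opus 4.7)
The plan is to construct $\score'$ from $\score$ by having the principal \emph{simulate}, on its own, the effort on the removed tasks $\effortset\setminus\effortset'$. Concretely, for each $j\in\effortset\setminus\effortset'$ independently draw a fresh state $\tilde\outcome_j$ from the uniform prior and a fresh signal $\tilde\signal_j$ from the full-effort distribution conditional on $\tilde\outcome_j$ (so $\tilde\signal_j=\tilde\outcome_j$ with probability $p_j$ and $\tilde\signal_j=\bot$ otherwise), and define
\[
\score'(\signal,\outcome)\;=\;\expect[\tilde\outcome,\tilde\signal]{\score\bigl((\signal_{\effortset'},\tilde\signal,\signal_{[\numasm]\setminus\effortset}),\,(\outcome_{\effortset'},\tilde\outcome,\outcome_{[\numasm]\setminus\effortset})\bigr)}.
\]
Boundedness $\score'\in[0,1]$ is immediate from $\score\in[0,1]$; the key design choice is that $\score'$ ignores both $\signal_{\effortset\setminus\effortset'}$ and the true state $\outcome_{\effortset\setminus\effortset'}$, replacing them with the fresh prior-drawn pair $(\tilde\outcome,\tilde\signal)$.

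Properness is the delicate step. For any reports $\signal,\signal'\in\signals$, I would couple the evaluations $\score'(\signal,\cdot)$ and $\score'(\signal',\cdot)$ to use the same simulated $(\tilde\outcome,\tilde\signal)$, and write $\tau=(\signal_{\effortset'},\tilde\signal,\signal_{[\numasm]\setminus\effortset})$, $\tau'=(\signal'_{\effortset'},\tilde\signal,\signal'_{[\numasm]\setminus\effortset})$ and $\outcome^\star=(\outcome_{\effortset'},\tilde\outcome,\outcome_{[\numasm]\setminus\effortset})$. It then suffices to show that $\expect[\outcome\sim\signal,\,\tilde\outcome,\,\tilde\signal]{\score(\tau,\outcome^\star)-\score(\tau',\outcome^\star)}\geq 0$. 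Using the independence across the three coordinate blocks $\effortset'$, $\effortset\setminus\effortset'$, $[\numasm]\setminus\effortset$, a short Bayes computation shows that conditional on $\tau$ the coupling's distribution of $\outcome^\star$ coincides with the Bayesian posterior $\outcome^\star\sim\tau$ to which properness of $\score$ refers: on $\effortset'$ and $[\numasm]\setminus\effortset$ it reduces to $\outcome_i\sim\signal_i=\tau_i$, while on $\effortset\setminus\effortset'$ the pair $(\tilde\outcome_j,\tilde\signal_j)$ has exactly the prior-times-signal-mechanism joint, so Bayes returns $\tilde\outcome_j\mid\tilde\signal_j=\tau_j$. Properness of $\score$ on $(\tau,\tau')$ then gives nonnegativity pointwise in $\tau$, and the outer expectation preserves it. The main obstacle lies exactly here: using an independent, prior-drawn $\tilde\outcome_j$ rather than the true $\outcome_j$ is what makes the coupled conditional law match the Bayesian posterior on the $\effortset\setminus\effortset'$ block; without this substitution the construction is not proper.

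For incentive compatibility, let $D_T$ denote the law of $(\signal,\outcome)$ when effort is on $T$, and write $\cost(T)=\sum_{i\in T}\cost_i$. Because the simulated $(\tilde\outcome_j,\tilde\signal_j)$ has precisely the full-effort joint, a direct unrolling yields $\expect[D_{\bar\effortset}]{\score'(\signal,\outcome)} = \expect[D_{\bar\effortset\cup(\effortset\setminus\effortset')}]{\score(\signal,\outcome)}$ for every $\bar\effortset\subseteq[\numasm]$; in particular $\expect[D_{\effortset'}]{\score'}=\expect[D_\effortset]{\score}$. Applying IC of $(\score,\effortset)$ to the deviation $\bar\effortset\cup(\effortset\setminus\effortset')$ gives
\[
\expect[D_\effortset]{\score}-\expect[D_{\bar\effortset\cup(\effortset\setminus\effortset')}]{\score} \;\geq\; \cost(\effortset)-\cost(\bar\effortset\cup(\effortset\setminus\effortset')),
\]
and a direct inclusion-exclusion expansion shows $\cost(\effortset)-\cost(\bar\effortset\cup(\effortset\setminus\effortset'))=\cost(\effortset')-\cost(\bar\effortset)+\cost(\bar\effortset\cap(\effortset\setminus\effortset'))\geq \cost(\effortset')-\cost(\bar\effortset)$. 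Combining the two displays gives $\expect[D_{\effortset'}]{\score'}-\cost(\effortset')\geq \expect[D_{\bar\effortset}]{\score'}-\cost(\bar\effortset)$ for every $\bar\effortset\subseteq[\numasm]$, so $(\score',\effortset')$ is incentive compatible, as required.
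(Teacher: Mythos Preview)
Your proof is correct and follows the same simulation idea as the paper. Two small differences: you explicitly simulate both the state $\tilde\omega_j$ and the signal $\tilde\sigma_j$ on $\effortset\setminus\effortset'$ and use this to verify properness of $\score'$ via the coupling argument (a point the paper's write-up leaves implicit), and you handle all deviations $\bar\effortset\subseteq[n]$ in one stroke through the cost inequality $\cost(\effortset)-\cost(\bar\effortset\cup(\effortset\setminus\effortset'))\geq \cost(\effortset')-\cost(\bar\effortset)$, whereas the paper first observes that $\score'$ ignores reports outside $\effortset'$ and then restricts attention to $\hat\effortset\subseteq\effortset'$, where the utility differences under $\score'$ and $\score$ coincide exactly.
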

\begin{proof}
To incentivize effort on $\effortset'$, we construct $\score'$ by simulating the agent's effort on the set $\effortset\setminus\effortset'$. 
For any reported signal profile $\signal'$, 
let $\score'(\signal', \outcome)=\expect[\sigma\sim\effortset]{\expect[\omega\sim\signal]{\score(\signal,\outcome) \given \signal_i = \signal'_i, \forall i\in\effortset'}}$ be the score that ignores the report in set $\effortset\setminus\effortset'$, and takes expected score over this set by simulating the signals assuming effort.

The proof follows by showing that exerting effort on set $\effortset'$ is the optimal strategy for the agent with scoring rule $\score'$. 
Since the new scoring rule $\score'$ only depends on reported signals in set $\effortset'$, 
the agent has no incentive to exert effort on any task outside $\effortset'$. 
For any subset $\hat{\effortset}\subseteq\effortset'$, 
the expected utility difference between exerting effort in sets $\hat{\effortset}$ and $\effortset'$ given scoring rule $\score'$
is identical to 
the expected utility difference between exerting effort in sets $\hat{\effortset}\cup(\effortset\setminus\effortset')$ and $\effortset$ given scoring rule $\score$. 
Since exerting effort on all tasks in set $\effortset$ is optimal for scoring rule $\score$, 
exerting effort on all tasks in set $\effortset'$ is optimal for scoring rule $\score'$.
\end{proof}
By \cref{lem:single opt,lem:monotone task}, 
it is without loss to assume that $p_i\geq 2\cost_i$
for all tasks $i\in[n]$, 
and we will maintain this assumption throughout the paper. 

There are two families of scoring rules that will arise in our analysis,
\emph{truncated separate scoring rules}
and \emph{threshold scoring rules}.
Intuitively, 
the truncated separate scoring rules specify a scoring rule for each task,
and the total score is the sum of scores on each task, 
truncated by $0$ and the budget.

\begin{definition}\label{def:separate cap}
A scoring rule $\score$ is a \emph{truncated separate scoring rule} with budget $B > 0$
if there exists single-dimensional scoring rules $\score_1,\dots,\score_n$ and shifting parameter $d\geq 0$
such that 
$\score(\signal,\outcome) = \min\lbr{B, \max\lbr{0, -d+\sum_{i\in[n]} \score_i(\signal_i,\outcome_i)}}$.
\end{definition}

Note that due to the truncation to $[0,B]$, scoring rule $S$ may not be proper in general
even if the individual single-dimensional scoring rules are proper.
In later sections, we will properly design the parameter $d$ and single-dimensional scoring rules such that the aggregated scoring rule will remain proper. 



\begin{definition}\label{def:threshold}
A scoring rule $\score$ is a \emph{threshold scoring rule}
if there exist a recommendation set 
$\effortset\subseteq [n]$ and a threshold $ \threshold \geq 0$ on the number of tasks for the agent to predict correctly,
such that:
\begin{itemize}
    \item the score is $0$ if there exists task $i\in\effortset$ such that the reported signal is informative but wrong, i.e.\  $\signal_i\neq\bot$ and $\signal_i\neq\outcome_i$;
    \item let $k\triangleq\#\{i\in\effortset: \signal_i=\outcome_i\}$ be the number of tasks that the agent predicts correctly. 
    The score is $1$ if the agent's correct prediction exceeds the threshold, i.e.\ $k\geq \eta$; and $\frac{1}{2^{\threshold-k}}$ otherwise.
\end{itemize}
\end{definition}

The threshold scoring rule in \Cref{def:threshold} is proper. 
In \cref{apx:exp proper for belief}, to help with the understanding, we provide an equivalent formulation of threshold scoring rules in the special case of threshold $1$ such that it is also proper for eliciting the belief.
Here we show that it is also equivalent to the following non-proper scoring rule with the same recommendation set $\effortset$ and threshold $\threshold$:
\begin{itemize}
    \item the score is $1$ if both (1) the number of reported informative signal exceeds the threshold, i.e., $\#\{i\in\effortset: \signal_i\neq \bot\}\geq \threshold$;
    and (2) any task $i\in\effortset$ such that the reported signal is informative is correct, i.e., $\signal_i\neq\outcome_i$ if $\signal_i\neq\bot$;
    \item the score is $0$ otherwise.
\end{itemize}

Conditioning on the agent receiving $k\leq \threshold$ informative signals, his best response is to guess the rest $\threshold -k$ signals, with  a probability $\frac{1}{2^{\threshold -k}}$ that he can guess all correctly and receive score $1$. His expected utility is thus $\frac{1}{2^{\threshold -k}}$, which implies this non-proper scoring rule is equivalent to the proper scoring rule in \Cref{def:threshold}.


\subsection{Sequential Effort Model}

In the sequential effort model, we assume that the agent can sequentially exert effort on different tasks 
before the interaction with the seller,
and the agent can make effort decisions based on the signals he has received on previous tasks. 
Formally, at any moment, let $\hat{\effortset}$ be the set of tasks that the agent has exerted effort, 
and let $\signal_{\hat{\effortset}}$
be the set of signals on those tasks. 
The agent's strategy $\strategy(\hat{\effortset},\signal_{\hat{\effortset}})\in[n]\cup\{\stopping\}\backslash\hat{\effortset}$
specifies a new task to exert effort on or stop exerting more efforts (represented by $\stopping$)
based on history observations.
The timeline of our model is as follows:
\begin{enumerate}
\item The principal commits to a proper scoring rule
$\score: \signals\times\outcomes\to [0,1]$.
\item The agent adopts a sequential strategy $\strategy$ for exerting effort on tasks.
\item States $\outcome=\{\outcome_i\}_{i=1}^n$ are realized. 
The agent receives signals $\signal$ 
and pays cost $\sum_{i\in \bar{\effortset}} \cost_i$
where $\bar{\effortset}$ is the set of tasks that the agent has exerted effort on before stopping.
\item The agent reports $\signal$
and receives score $\score(\signal, \outcome)$.
\end{enumerate}
Let $\expecto[\strategy]{\cdot}$ be the expectation when the agent follows strategy $\strategy$ for exerting efforts. 

Note that for the sequential effort setting, we do not require the principal to make strategy recommendations to the agent. 
The main reason is because the agent's optimal search problem might be computationally hard
given the designed scoring rules. 
In this case, it would be unreasonable to prove the performance guarantee of our proposed scoring rules assuming that the agent can best responding to the mechanism. 
Instead, we make a weak assumption on agent's behavior, and show that for any reasonable response of the agent, 
the expected value of the set of tasks that the agent has exerted effort on is large enough. 

\begin{definition}\label{def:obvious dominated}
A strategy $\strategy$ is \emph{obviously dominated} if there exists $\hat{\effortset}$, signal $\signal_{\hat{\effortset}}$
and task $i\not\in\hat{\effortset}$ such that $\strategy(\hat{\effortset},\signal_{\effortset})=\stopping$ 
and the agent increases his expected utility by exerting effort on task $i$ compared to stopping, i.e., 
\begin{align*}
\expect[\sigma_i\sim\{i\}]{\expect[\omega\sim\signal_i\cup\signal_{\hat{\effortset}}]{\score(\signal,\outcome)}}
- \cost_i
\geq \expect[\omega\sim\signal_{\hat{\effortset}}]{\score(\signal,\outcome)}.
\end{align*}
\end{definition}

Requiring the agent's strategy to be not obviously dominated is in the same spirit of undominated strategies in \citet*{babaioff2009single} and advised strategies in \citet*{cai2020implementation},
with the adaption to sequential environments \citep*[c.f.,][]{li2017obviously}.
In \cref{sec:search}, we will show that the principal's payoff is approximately optimal given our proposed scoring rules
if the agent's strategy is not obviously dominated.

\section{Computational Hardness}
\label{sec:hardness}

In this section, we show that the design of the optimal mechanism for maximizing the principal's value 
is computationally hard by reduction from the NP-hard integer valued subset sum problem. 

\paragraph{Integer valued subset sum.}
Given $n$ integers $z_1,\dots,z_n$
and a target $Z > z_i$ for all $i\in[n]$, 
does there exists a set $\effortset\in[n]$
such that $\sum_{i\in\effortset}z_i = Z$?

Our proof idea is similar to the reduction from the subset sum to the knapsack problem.
The main challenge for reduction to our problem is that, 
in order to prevent the agent from randomly guessing the states of the tasks, 
there is a specific incentive constraint that determines the set of incentivizable tasks. 
The incentive compatible constraint potentially generates a much smaller value than the optimal set of tasks with total costs below the budget. 
To avoid this randomly guessing issue, 
we add additional tasks to the scoring rule design problem
such that the agent's utility from making any random guess
is sufficiently low, 
and that the optimal objective value of the principal exceeds a given value
if and only if the objective $Z$ of the subset sum problem can be achieved.

\begin{theorem}\label{thm:np-hard}
Computing the optimal mechanism in the knapsack scoring problem is NP-hard even if the valuation function is additive.
\end{theorem}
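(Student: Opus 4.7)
The plan is to reduce from the integer-valued subset sum problem, following the standard template that reduces subset sum to knapsack, but inserting an extra ingredient that reflects the scoring-rule incentive constraints. Given an instance $(z_1,\dots,z_n,Z)$ of subset sum, I would create $n$ ``main'' tasks with values $\val_i = z_i$, costs $\cost_i$ proportional to $z_i$ (say $\cost_i = z_i / (2Z)$), and state revelation probabilities $p_i = 1$ (or a common $p$ chosen so that the per-task budget required by \Cref{lem:single opt} is exactly $z_i/Z$). Under this scaling, a truncated separate scoring rule can incentivize effort on a set $\effortset$ iff $\sum_{i\in\effortset} z_i \leq Z$, and the principal's value on $\effortset$ is $\sum_{i\in\effortset} z_i$; so the existence of a mechanism with value $\geq Z$ would correspond to the existence of a subset summing exactly to $Z$.

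The subtle part, flagged explicitly in the paper's sketch, is that the incentive constraint does not force the optimal mechanism to use the separate rule. In particular, a threshold-style scoring rule can pay $1$ whenever all reports on $\effortset$ are simultaneously correct, and an agent exerting no effort still collects $2^{-|\effortset|}$ by random guessing. This ``random guessing'' slack means a non-separate mechanism might incentivize a set with total cost strictly greater than $Z$, producing value above $Z$ even when no subset sums to $Z$ and thus destroying the reduction. To patch this, I would augment the instance with a collection of auxiliary tasks whose sole purpose is to make any guessing strategy negligibly profitable: either many independent auxiliary tasks with $p_i$ bounded away from $1$ that the scoring rule must be computed on (so the agent's expected guess payoff is driven to $o(1/Z)$), or a single ``anchor'' task with such a high cost-to-revelation ratio that the budget $1$ is essentially exhausted by it, leaving only a fractional budget of $Z/Z = 1$ available for the main tasks. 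I would choose the auxiliary parameters so that (i) in any IC mechanism, the expected score contribution that the main tasks can absorb is at most $\sum_{i\in\effortset} z_i/Z$, exactly matching the separate lower bound, and (ii) the optimal principal value equals $Z$ iff a subset summing to $Z$ exists.

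With the construction in hand, the correctness argument splits into two directions. For the easy direction, given a subset $\effortset^{*}$ with $\sum_{i\in\effortset^{*}} z_i = Z$, the truncated separate scoring rule on $\effortset^{*}$ (together with appropriate scoring on the auxiliary tasks) achieves objective $Z$ by \Cref{lem:single opt}. For the hard direction, I would argue that any IC mechanism achieving objective $\geq Z$ must induce effort on some set $\effortset$ with $\sum_{i\in\effortset} z_i \geq Z$; combining with the budget-to-cost relation forced by the auxiliary tasks $\sum_{i\in\effortset} z_i \leq Z$, this yields equality and hence a subset summing to $Z$. The main obstacle is calibrating the auxiliary tasks so that this budget-to-cost inequality holds for \emph{every} IC mechanism, not just separate ones; this requires a lower-bound argument on budget consumption that holds uniformly across all scoring rules proper for the induced posterior distributions, which is where techniques such as those behind \Cref{lem:single opt} and the monotonicity \Cref{lem:monotone task} would enter crucially.
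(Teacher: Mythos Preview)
Your high-level plan matches the paper's: reduce from integer subset sum, create one task per integer, and add auxiliary tasks whose role is to neutralize the random-guessing slack. But the proposal leaves the decisive step---the backward direction---as a promissory note, and the parameter choices you do commit to make that step harder rather than easier.

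The only universal bound you have on an \emph{arbitrary} IC mechanism is that the agent's total effort cost is strictly below the budget (the agent can always exert no effort and still earn a positive score). With your scaling $\cost_i = z_i/(2Z)$ and budget $1$, this yields $\sum_{i\in\effortset} z_i < 2Z$, not $\leq Z$; the factor of two is precisely the guessing slack, and your costs are not integers, so you cannot close the gap by rounding. You assert the auxiliary gadget will force the missing inequality, but you never say how; invoking \Cref{lem:single opt} and \Cref{lem:monotone task} does not help here, since neither gives a lower bound on the budget needed to incentivize a \emph{set} of tasks under an arbitrary proper scoring rule---\Cref{lem:single opt} is a single-task statement and \Cref{lem:monotone task} goes in the wrong direction.

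The paper's construction is engineered so that the backward step becomes a one-liner. All tasks have $p_i = 1$. Main tasks have $\val_i = \cost_i = z_i$ (value equals cost, both integers). There are $2kn$ auxiliary tasks, each with value $\bar{\val} = 1 + \sum_j z_j$ and cost $\bar{\cost} = 1 + \max_j z_j$, and the budget is $B = Z + 2kn\bar{\cost} + 1$; the decision threshold on the principal's value is $Z + 2kn\bar{\val}$. Because each auxiliary value exceeds the sum of all main values, reaching the threshold forces \emph{every} auxiliary task into the effort set. The agent's free positive utility then gives total cost $< B$, hence $\leq B - 1 = Z + 2kn\bar{\cost}$ by integrality; subtracting the auxiliary costs leaves at most $Z$ of cost on the main tasks, and value $=$ cost there gives main value $\leq Z$. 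For the forward direction the paper uses the \emph{threshold} rule (pay $B$ iff all reports are correct), not a separate rule: the $2kn$ auxiliary tasks guarantee that any deviation dropping $kn$ or more tasks earns at most $2^{-kn} B < 1$, with $k$ chosen just large enough for this, while dropping fewer tasks already costs more than half the budget and at least one guess halves the expected payoff. Neither of your two auxiliary proposals (many low-$p$ tasks, or a single high-cost anchor) simultaneously delivers both the forward IC argument and the clean cost-equals-value backward bound.
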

\begin{proof}
Given an integer valued subset sum instance
with integer parameters $z_1,\dots,z_n$ and $Z$, we construct a knapsack scoring problem. 
Let $\bar{\val} = 1+\sum_{i\in[n]}z_n$
and $\bar{\cost} = 1+\max_{i\in[n]}z_i$. 
Let $k$ be the minimum integer such that $2^{kn} > Z+2kn\bar{\cost}+1$. 
It is easy to see that the value of $k$ is polynomial in the number of digits to represent $Z$ and $\max_{i\in[n]}z_i$.
Construct a knapsack scoring problem with $(2k+1)n$ tasks
such that if the agent exerts effort on any task $i$, 
he observes the state $\outcome_i$ with probability $1$. The values and costs of the tasks are defined in the following way:
\begin{itemize}
    \item for each task $i\leq n$, let value and cost be $\val_i = \cost_i = z_i$;
    \item for each task $n+1\leq i\leq (2k+1)n$, let $\val_i = \bar{\val}$
and $\cost_i = \bar{\cost}$.
\end{itemize}
The budget of the principal is $Z+2kn\bar{\cost}+1$.
Note that this instance can be easily converted to our problem with budget 1 by re-scaling the budget and the costs by the same factor. 
We claim that the subset sum problem is true if and only if the optimal objective value for the knapsack scoring problem is $Z+2kn\bar{\val}$. 

If the optimal objective value for the knapsack scoring problem is $Z+2kn\bar{\val}$, 
this implies that in the optimal solution, 
the agent is incentivized to exert effort on all tasks $n+1\leq i\leq (2k+1)n$, 
which has a total contribution of $2kn\bar{\val}$. 
Thus the agent must exert effort on a subset $\effortset\subseteq[n]$ such that $\sum_{i\in\effortset} \val_i = Z$. 
Since $\val_i = z_i$ for all $i\in[n]$, 
$\effortset$ is a solution for the integer valued subset sum problem. 

If there exists a set of integers $\subsetsum\subseteq[n]$
such that $\sum_{i\in\subsetsum} z_i = Z$, 
consider the threshold scoring rule with recommendation set $\effortset=\subsetsum\cup\{n+1,\dots,(2k+1)n\}$ and threshold $\threshold=|\effortset|$, which scores budget $Z+2kn\bar{\cost}+1$ if the agents predicts all tasks in recommendation set $\effortset$ correctly. It is easy to verify that the utility of the agent for exerting effort on all tasks $i\in\effortset$ is $1$. The utility of the principal on recommendation set $\effortset$ is $Z+2kn\bar{\val}$. We are going to show this threshold scoring rule is incentive compatible and optimal. 

To prove this threshold scoring rule is incentive compatible, we divide agent's deviation into two cases:  1) the agent exerts effort on a small subset, so that he has to randomly guess  on a large number of tasks, which reduces his utility; 2) the agent exerts effort on a large subset, which induces a high total cost.
\begin{itemize}
    \item If the agent chooses to exert effort on a subset with size at most $\abs{\subsetsum} + kn$, he has to make random guess on at least $kn$ tasks. The utility of the agent is at most 
$2^{-kn}\cdot (Z+2kn\bar{\cost}+1) < 1$, 
which is strictly smaller than his utility for exerting effort on all tasks $i\in\effortset$.
\item If the agent chooses to exert effort on a subset with size between $\abs{\subsetsum} + kn$ and $\abs{\subsetsum} + 2kn -1$, 
the cost of effort for the agent is at least 
$Z+kn\bar{\cost} \geq \frac{1}{2}(Z+2kn\bar{\cost}+1)$ since $Z \geq 1$. 
Moreover, the expected payment to the agent 
is at most $\frac{1}{2}(Z+2kn\bar{\cost}+1)$
since the agent has to make a random guess on at least one task. 
This implies that the agent's utility is negative given this deviating strategy. 
\end{itemize}
Thus the agent's optimal choice is to exert effort on all tasks $i\in\effortset$.

Finally, we show that the optimal utility of the principal cannot exceed $Z+2kn\bar{\val}$.
Note that for the principal to obtain utility at least $Z+2kn\bar{\val}$, 
the agent must be incentivized to exert effort on all tasks $i\in \{n+1,\dots,(2k+1)n\}$
since the sum of value in $[n]$ is strictly below the value of any task $i\in \{n+1,\dots,(2k+1)n\}$. 
Moreover, the total cost of the agent for exerting effort given the optimal scoring rule is strictly less than $Z+2kn\bar{\cost}+1$
since the agent can obtain strictly positive utility by exerting no effort and randomly guessing. 
Since the costs are integer valued, 
the total cost is at most $Z+2kn\bar{\cost}$.
As the total cost for exerting effort on tasks $i\in \{n+1,\dots,(2k+1)n\}$
is $2kn\bar{\cost}$, 
the cost of the agent on tasks within subset $[n]$ is at most $Z$. 
Since the value coincides with the cost in this case, 
the value of the principal from incentivizing the agent to exert effort on tasks within $[n]$ is at most $Z$. 
Therefore, the optimal utility of the principal is $Z+2kn\bar{\val}$.
\end{proof}

\section{Bicriteria Approximation: Inflating the Budget}
\label{sec:budget inflation}


In this section, we show that there exists a proper 
truncated separate scoring rule with a constant budget 
that achieves higher value for the principal
than the optimal mechanism with budget $1$. 
Specifically, we show that by inflating the budget by a constant factor, 
the principal is able to attain at least the optimal objective value given budget~$1$
with relaxed incentive constraints. 

The approximation mechanism we design for the knapsack scoring problem uses the (approximately) optimal solution for the knapsack problem as a blackbox. 
Note that for general submodular valuations, computing the optimal solution for $\vopt$ is NP-hard. 
The following lemma shows that there exists a polynomial time algorithm to get an $\sfrac{e}{(e-1)}$-approximation.
\begin{lemma}[\citealp*{sviridenko2004note}]
\label{lem:submodular greedy}
For submodular valuation $\val$, there exists a polynomial time algorithm that computes a feasible solution $\effortset$
such that $\val(\effortset)\geq (1-\sfrac{1}{e})\vopt$.
\end{lemma}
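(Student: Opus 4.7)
The plan is to analyze a partial-enumeration greedy algorithm. Specifically, for every subset $T\subseteq[n]$ with $|T|\leq 3$ and $\sum_{i\in T}\cost_i\leq 1$, initialize $\effortset := T$ and then repeatedly add the element $i\notin \effortset$ maximizing the bang-per-buck marginal $(\val(\effortset\cup\{i\})-\val(\effortset))/\cost_i$ among those $i$ with $\cost_i$ fitting in the remaining budget. Return the best solution over all such seeds. Since there are $O(n^3)$ seeds and each greedy phase runs in polynomial time, the overall algorithm is polynomial.

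For the analysis, fix an optimum $O$ with $\val(O)=\vopt$. If $|O|\leq 3$ then $O$ itself is enumerated as a seed and the algorithm trivially returns $\vopt$, so I may assume $|O|>3$ and take $T^*$ to be the three items in $O$ of largest individual value. Consider the greedy run seeded at $T^*$, writing $\effortset_0=T^*\subseteq \effortset_1\subseteq \cdots$ and letting $i_{t+1}$ be the item added at step $t+1$. Submodularity of $\val$ together with the bang-per-buck selection rule yields the standard recursion
\begin{align*}
\val(O) - \val(\effortset_{t+1}) \leq (1-\cost_{i_{t+1}})\bigl(\val(O)-\val(\effortset_t)\bigr),
\end{align*}
which telescopes via $\prod_t(1-\cost_{i_t})\leq e^{-\sum_t\cost_{i_t}}$ to an $e^{-1}$ factor once the added costs saturate the budget.

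The main obstacle, as is standard for submodular maximization subject to a knapsack constraint, is the ``overflow'' element that the greedy rule would like to add next but that no longer fits in the remaining budget; a direct application of the recursion only controls the greedy output plus the value of this overflow element. The role of the size-$3$ enumeration is precisely to absorb this loss: because $T^*$ already contains the three most valuable items of $O$ and the greedy rule picks items in decreasing bang-per-buck order, every element considered during the greedy extension has value at most $\tfrac{1}{3}\val(T^*)$, so the gap from the overflow item can be charged back to $\val(\effortset_0)$. Combining the telescoped recursion with this charging argument and then taking the best seed $T$ delivers $\val(\effortset)\geq (1-\sfrac{1}{e})\vopt$, as claimed.
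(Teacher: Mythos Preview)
The paper does not give its own proof of this lemma; it is quoted directly from \citet*{sviridenko2004note} and used as a black box. Your sketch reproduces Sviridenko's algorithm (enumerate all feasible seeds of size at most three, run density greedy from each) and the skeleton of his analysis (the multiplicative recursion together with a charge of the overflow item against the seed).

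One step is loose as written: the claim that the overflow element has marginal at most $\tfrac{1}{3}\val(T^*)$ is justified, by your choice of $T^*$, only for elements of $O$, whereas the greedy phase may pick---and overflow on---an item outside $O$; the bang-per-buck ordering you invoke does not by itself control the marginal \emph{value} of such an item. In Sviridenko's analysis the seed is taken to be the three elements of $O$ with largest \emph{marginal} increments (in greedy order within $O$), and the overflow marginal is bounded via submodularity against that seed; that is the step that needs to be written out more carefully, but the overall structure is exactly as you describe.
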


\begin{figure}[t]
    \centering
    \fbox{
    \parbox{0.96\textwidth}{

    \textbf{Truncated Scoring Mechanism} for additive values with budget $11$
    
    Post the truncated scoring rule on a recommendation set $\effortset$
    
        \begin{itemize}
            \item For each assignment $i\in \effortset$, let the budget-minimal scoring rule be $\hat{\score}_i$.
            
            Posting single dimensional scoring rules:
                \begin{align*}
\score_i(\signal_i,\outcome_i)=\frac{9}{8}\hat{\score}_i(\signal_i,\outcome_i) = 
\begin{cases}
\frac{9\cost_i}{8p_i} & \signal = \bot\\
\frac{9\cost_i}{4p_i}\cdot \indicate{\signal_i=\outcome_i}  & \text{otherwise}
\end{cases}
\end{align*}
            \item Sum over the single dimensional scores, and truncate back to $[0, 11]$:
            \begin{align*}
                \score=\max\lbr{0, \min\lbr{11, \sum_i\score_i-d}},
            \end{align*}
            where $d=-\frac{11}{2}+\frac{9}{8}\sum_{i\in\effortset} \frac{\cost_i}{p_i}$ is the shift on the sum. 
            \end{itemize}

    }
    }
    \caption{Truncated Scoring Mechanism.}
    \label{fig:truncated scoring mechanism}
\end{figure}

\begin{figure}[t]
     \centering
    \fbox{
    \parbox{0.96\textwidth}{

\textbf{Recommendation set $\effortset$} for truncated scoring mechanism
        
        Input: ground set $\groundset$
        
        Output: set $\effortset$
        
        Greedily include tasks from $\groundset$ to $\effortset$, by value-cost ratio with a budget $\frac{3}{2}$ on the total cost. 
    }
    }
    \caption{Procedure for identifying optimal recommendation set for truncated scoring mechanism.}
    \label{fig:truncated scoring set}
\end{figure}
\begin{theorem}\label{thm:budget inflation}
The truncated scoring mechanism in \Cref{fig:truncated scoring mechanism} with a budget $B=11$
 guarantees value at least 
the optimal knapsack value (\vopt). 
Moreover,  for submodular values, 
there is a polynomial time algorithm for computing the recommendation set $\effortset$, which attains an $\sfrac{e}{(e-1)}$-approximation.
\end{theorem}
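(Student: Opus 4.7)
The theorem has two claims: the truncated mechanism with budget $11$ is incentive compatible, and the greedy recommendation set $\effortset$ has value at least $\vopt$ (or, in polynomial time for submodular valuations, $(1-\sfrac{1}{e})\vopt$). The value guarantee is the easier half. Since $\cost_i \leq p_i/2 \leq 1/2$ for every task by our standing assumption, greedy by value-to-cost ratio with the enlarged capacity $3/2$ always fits at least one more item than with capacity~$1$; combined with the standard fractional-knapsack argument this yields $\val(\effortset) \geq \vopt$ for additive $\val$. For submodular $\val$, \cref{lem:submodular greedy} applied at capacity $3/2$ gives a polynomial-time $\effortset$ with $\val(\effortset) \geq (1-\sfrac{1}{e})\vopt_{3/2}^{\text{sub}} \geq (1-\sfrac{1}{e})\vopt$, where the last inequality uses monotonicity of the knapsack optimum in the capacity.

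The core of the proof is IC. Write $X := \sum_i \score_i - d$ so the mechanism pays $\max\{0,\min\{11,X\}\}$. I proceed in three steps. Step~A: each $\score_i = \tfrac{9}{8}\hat{\score}_i$ is a positive scaling of the proper single-task rule from \cref{lem:single opt}, so the untruncated shifted sum is proper, and the scaling inflates the per-task IC slack (utility of truthful effort on $i$ minus utility of no effort plus reporting $\bot$) to exactly $\cost_i/8$. Step~B: the calibrated shift $d$ forces $X \geq \tfrac{11}{2}$ almost surely under truthful play, so the lower truncation is inactive; moreover $X = \tfrac{11}{2} + \sum_i X_i$ with independent $X_i \in \{0,\tfrac{9\cost_i}{8p_i}\}$ and $\Pr[X_i \neq 0] = p_i$, so $\expect{X} \leq \tfrac{11}{2} + \tfrac{9}{8}\cdot\tfrac{3}{2} < 8$. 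Step~C: for a deviation dropping tasks $T = \effortset\setminus\effortset'$ and reporting $\bot$ on them, the corresponding untruncated sum $X^{\effortset'}$ satisfies $X \geq X^{\effortset'}$ pointwise and $(X-11)^+ - (X^{\effortset'}-11)^+ \leq (X-X^{\effortset'})\mathbb{1}[X>11]$; by independence, the expectation of the right side decomposes task-by-task as $\sum_{i\in T} p_i\tfrac{9\cost_i}{8p_i}\Pr[X_{-i} > \tfrac{11}{2}-\tfrac{9\cost_i}{8p_i}]$, where $X_{-i}$ collects the other tasks' contributions. It suffices to show each per-task tail probability is at most $\tfrac{1}{9}$, which absorbs the $\tfrac{9}{8}$ overhead and leaves a marginal truncation loss of at most $\cost_i/8$ per dropped task, exactly matching the untruncated slack.

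The main obstacle is the per-task tail bound. The plan is to apply Bernstein's inequality to $X_{-i}$, whose independent summands have range at most $\tfrac{9}{16}$ and total variance at most $\tfrac{81}{64}\sum_j\tfrac{\cost_j^2}{p_j} \leq \tfrac{81}{128}\sum_j\cost_j \leq \tfrac{243}{256}$ (using $\cost_j^2/p_j \leq \cost_j/2$ together with $\sum_j\cost_j \leq 3/2$). The threshold is at least $\tfrac{11}{2}-\tfrac{9}{16} = \tfrac{79}{16}$ while $\expect{X_{-i}} \leq \tfrac{27}{16}$, so the required deviation exceeds $\tfrac{13}{4}$; Bernstein then yields a tail probability bounded by $e^{-\Theta(1)}$, comfortably below $\tfrac{1}{9}$. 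The constants $\tfrac{9}{8}$ (scaling) and $11$ (budget) in the mechanism are precisely tuned so that this tail inequality goes through. Deviations that instead misreport on tasks in $\effortset'$ or informatively guess on dropped tasks are handled by reducing to the $\bot$-reporting case via single-task properness of each $\score_i$ and the symmetry of the uniform prior on $\{0,1\}$, completing the IC proof and hence establishing that the principal's payoff is $\val(\effortset)$.
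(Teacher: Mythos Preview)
Your proposal is correct and rests on the same two pillars as the paper: concavity/symmetry to dispatch guessing on $\bot$-signal tasks, and a Bernstein tail bound to show the upper truncation eats at most a $\tfrac{1}{9}$ fraction of the per-task slack, so that the $\tfrac{9}{8}$ scaling survives.

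The one structural difference worth noting is how the effort constraint is organized. The paper argues \emph{locally}: for any partial effort set $\hat{Z}\subsetneq\effortset$ and any missing task $\hat{i}$, it defines the event $\hat{\event}_{\hat{i}}$ that $\sum_{j\in\hat{Z}} X_j$ stays below the overflow threshold and bounds $\prob{\hat{\event}_{\hat{i}}}\geq\tfrac{8}{9}$ via Bernstein, then chains single-task improvements. You argue \emph{globally}: you compare $\min\{11,X\}$ to $\min\{11,X^{\effortset'}\}$ directly, use the pointwise inequality $(X-11)^+-(X^{\effortset'}-11)^+\leq (X-X^{\effortset'})\indicate{X>11}$, and exploit the two-point support of each $X_i$ to factor $\expect{X_i\indicate{X>11}}=\tfrac{9\cost_i}{8}\prob{X_{-i}>\tfrac{11}{2}-\tfrac{9\cost_i}{8p_i}}$. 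Your $X_{-i}$ is computed under full effort on $\effortset$ (hence stochastically larger than the paper's partial-effort sum), so your tail bound is nominally looser, but the constants still clear $\tfrac{1}{9}$ comfortably. The global decomposition is arguably cleaner since it avoids the chaining step; the paper's version makes it slightly more transparent that the bound is uniform over all partial effort sets $\hat{Z}$.

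Your properness sketch (``symmetry of the uniform prior'') is terse but sound: the truncation $x\mapsto\max\{0,\min\{11,x\}\}$ is concave, and guessing on any $\bot$-signal task replaces a deterministic contribution by a mean-preserving spread, so Jensen gives the conclusion. The paper spells this out via an explicit symmetric-interval argument rather than invoking Jensen, but the content is the same.
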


The main idea is that with multiple tasks, the sum of the scores on different tasks concentrates
around its expectation. 
Therefore, we can take the sum of the scores and shift it such that the expected score of not exerting any effort is only one half of the budget $11$. 
Moreover, with an inflated budget,
we can ensure that the ex post shifted sum remains in the range of $[0,11]$ with high probability,
and hence the agent's incentive is almost aligned with his incentive in separate scoring rules without the truncation. 
This allows us to show that the designed truncated separate scoring rule is proper, 
and the agent has the incentive to follow the recommendation. 
The detailed proof of the theorem is provided in \cref{apx:proof budget inflation}.

\section{Value Approximation}
\label{sec:approximation}

In this section, we show that the better of 
a truncated separate scoring rule
and a threshold scoring rule 
is a constant approximation to the optimal value of the knapsack scoring problem (\opt). 
The idea is to divide the set of tasks into two subsets based on whether the sum of
optimal individual single-dimensional scoring rule concentrates, 
and then design approximately optimal scoring rule for each subset separately. 
This concept is analogous to the core-tail decomposition adopted for multi-item auctions \citep*{BILW-20}, 
while the details for proving the results are quite different.

The first case is to consider tasks such that their costs are small 
compared to their probabilities of revealing the state when the agent exerts effort. 
In this case, the budget required for incentivizing each single task is small.
Thus, analogous to \cref{thm:budget inflation},
the variance of the score for incentivizing each task separately is small
and the sum of the scores concentrates well given the total budget 1. 
This implies that the ex post sum is close to its expectation with high probability.
By truncating the sum of optimal single-dimensional scoring rules to comply with the ex post budget constraint, 
the incentives of the agent for exerting effort is barely affected, 
and we obtain a constant approximation to the knapsack solution in this case. 

The second case is to consider tasks such that their costs are large 
compared to their probabilities of revealing the states when the agent exerts effort.
Unlike the traditional knapsack problem where large costs on the tasks indicate the existence of a single task with valuation close to the optimal, 
in the effort incentivization problem, 
there still exists the hard case where in the optimal mechanism, the agent need to be incentivized to exert effort on a large number of tasks and each task only contributes to a small fraction of the optimal objective value. 
Moreover, since the probabilities of revealing the states are small, 
the expected number of tasks on which the agent receives informative signals is small 
and hence the sum of scores may not concentrate. 
Alternatively, we show that in this case, 
the score of the agent has to be close to the budget if 
he receives an informative signal on any task. 
Therefore, to incentivize the agent to exert effort on any task $i$, 
the total probability that the agent gets an informative signal on any task $i'\neq i$ cannot be too large
because otherwise the principal will not have enough budget to incentivize task $i$ after rewarding the agent for acquiring an informative signal on task $i'$. 
Thus, an upper bound is imposed on the sum of probabilities
for the set of incentivizable tasks. 
We find a set of tasks that can be incentivized by a threshold scoring rule through a greedy algorithm on the ratio of the value to the probability,
and show that the value of this set is a constant approximation to the value given by the optimal scoring rule. 
\begin{figure}[t]
    \centering
    \fbox{
    \parbox{0.96\textwidth}{

    \textbf{Threshold Scoring Mechanism} for additive values with budget $1$

    Post the threshold scoring rule on a recommendation set $\effortset$
        \begin{itemize}
            \item Score $1$ if both (1) at least $1$ reported signal in $\effortset$ is informative; and (2) any task reported signal that is informative is correct.
            \item Score $0$ otherwise.
        \end{itemize}
    }
    }
    \caption{Threshold Scoring Mechanism.}
    \label{fig:threshold scoring mechanism}
\end{figure}
\begin{figure}[t]
    \centering
    \fbox{
    \parbox{0.96\textwidth}{

    \textbf{Recommendation set $\effortset$} for threshold scoring mechanism
    
    Input: ground set $\groundset$.
    
        For each task $j$ in the ground set $\groundset$:
        \begin{itemize}
            \item  initialize by adding $j$ into the recommendation $\effortset^j=\{j\}$;
            \item update the ground set $\groundset$:
            $\groundset^j=\{i\in \groundset\mid 1-\frac{2\cost_j}{p_j}+p_j\leq 1-\frac{2\cost_j}{p_i}+p_j\}$;
            
            \item greedily include tasks from $\groundset^j$ by the value-probability ratio $\frac{\val_i}{p_i}$ with a  budget\\
            $\sum_{j\in \effortset^j}p_i\leq 1-\frac{2\cost_j}{p_j}+p_j$;
            \item Consider set $\effortset'^j=\{j, j^*\}$, where $j^*=\argmax_{i\in \groundset^j}\val(i)$ is the most valuable task.
            
            Take the better of the knapsack solution and the set $\effortset'^j$
            
        \end{itemize}
        
        Output the set with maximum value: $\effortset=\argmax_{\effortset^j}\val(\effortset^j)$.
    }
    }
    \caption{Procedure for identifying approximately optimal recommendation set.}
    \label{fig:recommendation set static}
\end{figure}

\begin{theorem}\label{thm:approximation}
The better of 
a truncated separate scoring rule
and a threshold scoring rule 
is a $1091$-approximation to the optimal value of the knapsack scoring problem (\opt).
Moreover, for additive values, the parameters of such mechanism can be computed in polynomial time, 
and for submodular values, 
there is a polynomial time algorithm for computing the parameters that loses an additional multiplicative factor of $\sfrac{e}{(e-1)}$ in approximation ratio.
\end{theorem}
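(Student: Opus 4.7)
The plan is a core-tail decomposition of the tasks, paralleling the bundling-versus-selling-separately argument of \citet*{BILW-14}. Using the crude upper bound $\opt \leq \vopt$ together with \cref{lem:single opt} (which restricts attention to tasks with $2\cost_i/p_i \leq 1$), I would fix a constant $\tau \in (0,1)$ and split the incentivizable tasks into a \emph{core} $\groundset_C = \{i : 2\cost_i/p_i \leq \tau\}$ and a \emph{tail} $\groundset_T = \{i : 2\cost_i/p_i > \tau\}$. Because submodularity implies subadditivity, the optimal knapsack set $\effortset^*$ satisfies $\val(\effortset^*) \leq \val(\effortset^* \cap \groundset_C) + \val(\effortset^* \cap \groundset_T)$, so the larger of the two restricted knapsack values, call them $\OPT_C$ and $\OPT_T$, is at least $\vopt/2$. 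It therefore suffices to approximate each side by a constant using a single family of mechanism.

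For the core, I would apply the Truncated Scoring Mechanism of \cref{fig:truncated scoring mechanism} rescaled from budget~$11$ down to budget~$1$. The essential fact is that on $\groundset_C$ each single-dimensional scoring rule $\hat{\score}_i$ has range at most $\tau$. Hence the sum of the scaled single-dimensional scores, centered at $1/2$, has variance bounded by a constant multiple of $\tau$; by Chebyshev's inequality it lies in $[0,1]$ except on a $O(\tau)$-probability event. For $\tau$ chosen small enough, this ensures that the truncation to $[0,1]$ distorts the agent's expected utility by a constant fraction, the truncated rule remains proper, and the recommendation set returned by the greedy procedure of \cref{fig:truncated scoring set} applied to $\groundset_C$ yields a constant-factor approximation to $\OPT_C$.

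For the tail, I would apply the Threshold Scoring Mechanism of \cref{fig:threshold scoring mechanism} with $\threshold = 1$. A direct computation shows that when the agent truthfully exerts effort on $\effortset$, his expected score equals $1 - \tfrac{1}{2}\prod_{i \in \effortset}(1-p_i)$; comparing against the deviation $\effortset \setminus \{j\}$, and linearizing via $\prod_{i}(1-p_i) \geq 1 - \sum_i p_i$, gives the incentive constraint $\sum_{i \in \effortset} p_i \leq 1 - 2\cost_j/p_j + p_j$ for each pivot $j \in \effortset$. This is exactly the constraint enforced by \cref{fig:recommendation set static}: iterating over each candidate pivot~$j$, restricting the ground set to tasks where $j$'s bound is tightest, and greedily adding by ratio $\val_i/p_i$. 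Taking the better of this greedy solution and the single most valuable feasible task absorbs the factor-$2$ loss of the classical knapsack greedy and yields a constant-factor approximation to $\OPT_T$.

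Combining the two mechanisms and taking the better, and then tracking the constants that arise from (i)~the factor-$2$ decomposition loss, (ii)~the concentration slack driven by $\tau$ in the core, (iii)~the fractional-knapsack greedy loss in the tail, and (iv)~the $\sfrac{e}{(e-1)}$ loss from \cref{lem:submodular greedy} in the submodular case, would assemble to the stated $1091$-approximation; each step is a greedy or scaling procedure, so the algorithm is polynomial. The main obstacle is the tail analysis: the IC constraint $\sum_i p_i \leq 1 - 2\cost_j/p_j + p_j$ is \emph{pivot-dependent}, so the set of incentivizable subsets of $\groundset_T$ is a family of pivot-indexed knapsack constraints rather than a single one. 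The key delicate step is upper-bounding the value achievable by \emph{any} proper scoring rule on $\groundset_T$ by the maximum over these pivot-indexed LPs, so that the enumeration-plus-greedy construction of \cref{fig:recommendation set static} can be shown to be a constant approximation to the scoring-rule optimum and not merely to the knapsack optimum on the tail.
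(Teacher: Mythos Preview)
Your skeleton matches the paper's: core/tail split by the ratio $2\cost_i/p_i$, truncated separate on the core via concentration, threshold on the tail. The paper uses a finer four-way partition ($X,Y_1,Y_2,Y_3$) to manage constants, but that is bookkeeping. The genuine gap is exactly the one you name in your last paragraph, and it is not a matter of tracking constants. The inequality $\sum_{i\in\effortset} p_i \leq 1 - 2\cost_j/p_j + p_j$ you derive is the IC constraint \emph{of the threshold rule itself}; you still need the matching upper bound valid for \emph{every} proper scoring rule. Your opening move $\opt\leq\vopt$ is too coarse for this: on tail tasks with $2\cost_i/p_i = 1-\epsilon$ and small $p_i$, the threshold rule can incentivize only $\sum p_i = O(\epsilon)$, while cost-feasibility alone permits $\sum p_i \approx 2$, so comparing against any knapsack-type benchmark on the tail gives a ratio of order $1/\epsilon$, not a constant.

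The paper closes the gap with \cref{lem:asym-upper-bound}: for any $\effortset$ with $p_i\leq 1/4$ and $2\cost_i/p_i\geq 15/16$ that is incentivizable by \emph{some} proper scoring rule, there is a pivot $i^*\in\effortset$ with $\sum_{i\in\effortset}p_i \leq \tfrac{16}{3}\bigl(1-2\cost_{i^*}/p_{i^*}\bigr) + p_{i^*}$. The proof is not routine: it conditions on whether the agent receives zero or one informative signal, uses properness to lower-bound the conditional score on each one-signal event, and then shows that the IC constraint for the pivot task forces the probability of ``some informative signal elsewhere'' to be small. This lemma is precisely what lets the enumeration-plus-greedy of \cref{fig:recommendation set static} be compared against $\opt$ on the tail rather than $\vopt$; without it the tail half of your argument does not close, and the $1091$ cannot be assembled.
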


We first show an upper bound on the sum of state revelation probabilities for each set of incentivizable tasks 
when the ratio of the cost to the probability for any task in this set is large.

\begin{lemma}\label{lem:asym-upper-bound}
For any set $\effortset\subseteq[n]$ such that
$p_i\leq \frac{1}{4}$ and
$\frac{2\cost_i}{p_i}\geq \frac{15}{16}$ for all tasks $i\in\effortset$, 
if the set $\effortset$ can be incentivized by a proper scoring rule with budget $1$, there exists a budget-pivotal task $i^*=\argmin_{i\in\effortset}\frac{16}{3}\rbr{1-\frac{2\cost_{i^*}}{p_{i^*}}}+p_{i^*}$, such that the budget over total revealing probabilities is determined by $i^*$:
$$\sum_{i\in \effortset} p_i \leq \frac{16}{3}\rbr{1-\frac{2\cost_{i^*}}{p_{i^*}}}+p_{i^*}.$$
\end{lemma}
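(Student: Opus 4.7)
The plan is to exploit the IC constraint that prevents the agent from skipping the budget-pivotal task $i^*$, combined with the ex-post score bound.

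First, write $U(\effortset')$ for the agent's expected score under truthful reporting with effort on $\effortset'$. Let $W$ denote the expected score under effort on $\effortset$ conditional on $\signal_{i^*}$ being informative. Decomposing by whether $\signal_{i^*}$ is informative gives
\[
U(\effortset) = p_{i^*} W + (1-p_{i^*})\, U(\effortset \setminus \{i^*\}),
\]
and the IC constraint $U(\effortset) - U(\effortset \setminus \{i^*\}) \geq \cost_{i^*}$ simplifies to $W - U(\effortset \setminus \{i^*\}) \geq \cost_{i^*}/p_{i^*}$. Since the ex-post score is bounded by $1$, we have $W \leq 1$, yielding the key upper bound
\[
U(\effortset \setminus \{i^*\}) \;\leq\; 1 - \frac{\cost_{i^*}}{p_{i^*}} \;=\; 1 - \tfrac{1}{2}\cdot \tfrac{2\cost_{i^*}}{p_{i^*}}.
\]

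Second, I would bound $U(\effortset \setminus \{i^*\})$ from below in terms of $\sum_{i \neq i^*} p_i$. For each other task $i \neq i^*$, an analogous decomposition (conditioning on $\signal_i$ being informative while $\signal_{i^*} = \bot$) combined with the IC constraint of $\effortset$ against $\effortset \setminus \{i, i^*\}$ forces the scoring rule to reward the event that $\signal_i$ is informative by at least $\cost_i/p_i$ above the uninformative baseline. Because the parameter assumptions $p_i \leq 1/4$ and $2\cost_i/p_i \geq 15/16$ pin the ratio $\cost_i/p_i$ into a narrow window near $1/2$, these per-task rewards aggregate into a linear lower bound on $U(\effortset \setminus \{i^*\})$ of the form $U(\emptyset) + \kappa \sum_{i \neq i^*} p_i$ for an explicit positive constant $\kappa$ determined by the parameter window.

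Combining this lower bound with the upper bound from the first step, together with the IC against the empty set (which implies $U(\emptyset) \leq 1 - \sum_i \cost_i$), and applying the parameter inequalities, yields $\sum_{i \neq i^*} p_i \leq \frac{16}{3}\bigl(1 - \frac{2\cost_{i^*}}{p_{i^*}}\bigr)$. Rearranging and noting that $i^*$ is chosen as the minimizer of $\frac{16}{3}(1 - 2\cost_i/p_i) + p_i$, which makes the bound uniform, completes the argument.

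The main obstacle I anticipate is the second step: deriving the lower bound on $U(\effortset \setminus \{i^*\})$ with the precise linear coefficient in $\sum_{i \neq i^*} p_i$ needed to produce the sharp constant $16/3$. This requires simultaneously combining IC constraints across all remaining tasks and carefully using that the ex-post bound $\score \leq 1$ forces the scoring rule to share its limited reward budget among all informative signals; the factor $16/3$ presumably emerges from a case analysis based on the extremal values $p_i \leq 1/4$ and $2\cost_i/p_i \geq 15/16$.
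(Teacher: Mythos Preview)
Your Step~1 upper bound $U(\effortset\setminus\{i^*\})\le 1-\cost_{i^*}/p_{i^*}$ is correct, but it is too coarse to yield the constant $16/3$. The paper does not bound $W$ and $U(\effortset\setminus\{i^*\})$ separately; it bounds the \emph{marginal gain} $W-U(\effortset\setminus\{i^*\})$ by conditioning on the event $\event_{i^*}$ that no task in $\effortset\setminus\{i^*\}$ gives an informative signal. On $\event_{i^*}$, properness (a guessing attack: the agent can report a random state for $i^*$ and be correct with probability $\tfrac12$) forces the gain to be at most $\tfrac12$, strictly sharper than the $1-s_0$ you would extract from $W\le 1$. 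On the complement, the gain is at most $1-\hat{s}$, where $\hat{s}$ is the expected score given at least one other informative signal. Combining these with $\hat{s}\ge \tfrac{11}{16}$ gives $q_{i^*}\ge 1-\tfrac{8}{3}\bigl(1-\tfrac{2\cost_{i^*}}{p_{i^*}}\bigr)$. If instead you plug $s_0\ge\tfrac14$ and $\hat{s}\ge\tfrac{11}{16}$ into your cruder inequality $q_{i^*}s_0+(1-q_{i^*})\hat{s}\le 1-\tfrac{\alpha}{2}$ (with $\alpha=\tfrac{2\cost_{i^*}}{p_{i^*}}$), you only get $q_{i^*}\ge(8\alpha-5)/7$, which stays bounded away from $1$ as $\alpha\to 1$ and so cannot recover $16/3$.

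Your Step~2 outline also misses the machinery it actually needs. The IC constraints of $\effortset$ against $\effortset\setminus\{i,i^*\}$ do not aggregate into a linear lower bound $U(\emptyset)+\kappa\sum_{i\ne i^*}p_i$: the per-task rewards all draw on the same unit score budget, so there is no telescoping. What the paper does instead is a bootstrap entirely driven by properness: first show $s_0\ge\tfrac14$ (by contradiction, using the guessing bound and the IC for any single task, valid only under the case hypothesis $q_0=\prod_j(1-p_j)\ge\tfrac12$); then use the IC for each $i$ to get $s_i\ge\tfrac{11}{16}$; then use properness (more signals cannot lower the score) to get $\hat{s}\ge\tfrac{11}{16}$. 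The case $q_0<\tfrac12$, which your outline does not address, is ruled out separately: remove tasks one at a time (this is where $p_i\le\tfrac14$ is used) to land on a subset with $\bar q_{i^*}\in[\tfrac12,\tfrac23)$, apply Case~1 to that subset, and reach a contradiction since $\sum_{i\in\bar\effortset\setminus\{i^*\}}p_i\ge 1-\bar q_{i^*}>\tfrac13\ge\tfrac{16}{3}\bigl(1-\tfrac{2\cost_{i^*}}{p_{i^*}}\bigr)$.
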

\begin{proof}
We first define several useful notations. We define $\event$ to be the event that the agent receives no informative signal on all tasks in $\effortset$. 
Let $q_0= \prob{\event}=\Pi_{j\in \effortset} (1-p_j)$ be the probability that event~$\event$ happens. 
Let $s_0 = \expect[\outcome\sim\signal]{\score(\signal, \outcome)\given \event}$
be the expected score of the agent when he receives no informative signal. We also define $\event_i$ to be the event that the agent receives no informative signal on all tasks in $\effortset \backslash \{i\}$. let $q_i=\prob{\event_i} = \Pi_{i\in \effortset\setminus\{i\}}(1-p_j)$ be the probability that the event $\event_i$ happens.
Let $s_i=\expect[\outcome\sim\signal]{\score(\signal, \outcome)\given \event_i, \signal_i \neq \bot}$ be the expected score of the agent when he only receives an informative signal on task~$i$. 

Next we divide the analysis into two cases: (1) $q_0 \geq \sfrac{1}{2}$; and (2) $q_0 < \sfrac{1}{2}$. 
\begin{enumerate}[{Case} 1:]
\item $q_0 \geq \sfrac{1}{2}$. 
In this case, we first show that the expected score for no informative signal $s_0$ can not be less than $\sfrac{1}{4}$. Suppose $s_0 < \sfrac{1}{4}$, then we show that the incentive constraint for exerting effort on any task $i$ is violated.
The utility increase of the agent for exerting effort on task $i$ is 
\begin{align*}
&\expect[\signal\sim\effortset]{\expect[\outcome\sim \signal]{\score(\signal,\outcome)}} - \expect[\signal\sim\effortset\setminus\{i\}]{\expect[\outcome\sim\signal]{\score(\signal,\outcome)}} \\
&= p_{i} \rbr{\expect[\signal\sim\effortset]{\expect[\outcome\sim\signal]{\score(\signal,\outcome)\given \signal_{i}\neq\bot}} 
- \expect[\signal\sim\effortset]{\expect[\outcome\sim\signal]{S(\signal,\outcome) \given \signal_{i} = \bot}}}
\end{align*}
Then, we bound the expected score increase for receiving an informative signal on task~$i$. 
Conditioned on event $\event_i$, the expected score difference is $s_i - s_0$. Since the scoring rule is proper, 
we have $s_0 \geq s_i/2$, which implies $s_i - s_0 \leq s_0 \leq \sfrac{1}{4}$.
Conditioned on the complement event $\bar{\event}_i$, by the properness of scoring rule, the expected score difference is at most $\sfrac{1}{2}$.
Thus, the utility increase for exerting effort on task $i$ is at most
$$
\expect[\signal\sim\effortset]{\expect[\outcome\sim \signal]{\score(\signal,\outcome)}} - \expect[\signal\sim\effortset\setminus\{i\}]{\expect[\outcome\sim\signal]{\score(\signal,\outcome)}} \leq p_{i} \rbr{q_{i} (s_i-s_0) + \frac{1}{2}(1-q_{i})} 
< \frac{3p_{i}}{8} < \cost_{i},
$$
which violates the incentive constraint for exerting effort on task $i$.

Therefore, we have $s_0\geq \sfrac{1}{4}$. We now lower bound the expected score $s_i$ for receiving only one informative signal on task $i$.
For any task $i$, the incentive constraint implies that 
\begin{align*}
\cost_{i}&\leq 
\expect[\signal\sim\effortset]{\expect[\outcome\sim\signal]{S(\signal,\outcome)}} - \expect[\effortset\backslash\{i\}]{\expect[\signal]{S(\signal,\outcome)}} \\
&= p_{i} \rbr{\expect[\signal\sim\effortset]{\expect[\outcome\sim\signal]{S(\signal,\outcome)\given \signal_{i}\neq\bot}} 
- \expect[\signal\sim\effortset]{\expect[\outcome\sim\signal]{S(\signal,\outcome) \given \signal_{i} = \bot}}} \\
&\leq p_{i} \rbr{q_{i} (s_{i}-s_0) + \frac{1}{2} (1-q_{i})}.
\end{align*}
Since $q_i \geq q_0 \geq \sfrac{1}{2}$ and $\sfrac{c_i}{p_i}\geq \sfrac{15}{32}$, this further implies that 
\begin{align*}
s_{i} \geq
s_0 + \frac{\frac{\cost_{i}}{p_{i}}-\frac{1}{2}(1-q_{i})}{q_{i}} \geq \frac{11}{16}.
\end{align*}

Consider any fixed task $i^*\in \effortset$. Let $\hat{s} = \expect[\signal\sim\effortset]{\expect[\omega\sim\signal]{S(\signal,\omega)\given \signal_{i^*} = \bot, \bar{\event}_{i^*}}}$ be the expected score of the agent when he has no signal on task $i^*$, 
and at least one informative signal on tasks in $\effortset\backslash\lbr{i^*}$.
Since the scoring rule is proper, $\hat{s} \geq \min_{i}s_{i} \geq \sfrac{11}{16}$. 
The incentive constraint on task $i^*$ implies that 
\begin{align*}
\cost_{i^*}&
\leq p_{i^*} \rbr{\expect[\signal\sim\effortset]{\expect[\outcome\sim\signal]{S(\signal,\outcome)\given \signal_{i^*}\neq\bot}} 
- \expect[\signal\sim\effortset]{\expect[\outcome\sim\signal]{S(\signal,\outcome) \given \signal_{i^*} = \bot}}} \\
&\leq p_{i^*} \rbr{\frac{q_{i^*}}{2} + (1-q_{i^*})(1-\hat{s})},
\end{align*}
where the last inequality is due to the expected score difference conditioned on $\bar{\event}_i$ is at most $1-\hat{s}$.
Hence, we have 
\begin{align*}
q_{i^*} \geq 1-\frac{8}{3}\rbr{1-\frac{2\cost_{i^*}}{p_{i^*}}}.
\end{align*}
Note that the probability that the agent receives at least one informative signal in $\effortset\backslash\lbr{i^*}$
is at least the sum of probability that 
the agent receives an informative signal on task $i$
and zero informative signal on tasks in $\effortset\backslash\lbr{i^*,i}$.
Note that the probability of the latter event is at least $q_0\geq \sfrac{1}{2}$.
Thus, it holds that
\begin{align*}
1-q_{i^*} \geq \frac{1}{2}\sum_{i\in\effortset\backslash\{i^*\}} p_i.
\end{align*}
By combining the two inequalities above, we have 
\begin{align*}
\sum_{i\in \effortset\backslash\lbr{i^*}} p_i \leq 2(1-q_{i^*}) 
\leq \frac{16}{3}\rbr{1-\frac{2\cost_{i^*}}{p_{i^*}}}.
\end{align*}

\item Suppose $q_0 < \sfrac{1}{2}$. Consider any fixed task $i^*\in \effortset$. In this case, we first show that there exists a subset $\bar{\effortset}\subseteq\effortset$
which satisfies the following three properties: (1) $i^*\in\bar{\effortset}$;
(2) $\bar{\effortset}$ can be incentivized by a proper scoring rule;
and (3) the probability of no informative signal in $\bar{\effortset}\backslash \{i^*\}$ is between $[\sfrac{1}{2},\sfrac{2}{3})$. By case $1$, this subset $\bar{\effortset}$ cannot be incentivized, which is a contradiction. 

To find such a subset, we remove tasks in $\effortset \backslash \{i^*\}$ from $\effortset$ one by one randomly. Since $p_{i^*} \leq \sfrac{1}{4}$ and $q_0 < \sfrac{1}{2}$, we have $q_{i^*} = q_0/(1-p_{i^*}) < \sfrac{2}{3}$. If $q_{i^*} \in [\sfrac{1}{2},\sfrac{2}{3})$, then $\effortset$ satisfies three properties. We use $\effortset'$ to denote the subset in this deletion process. Let $q_{i^*}'$ be the probability of no informative signal in $\effortset' \backslash\{i^*\}$. If $q_{i^*} < \sfrac{1}{2}$, then we have $q_{i^*}'$ increases from $q_{i^*}$ to $1$ during this process. If there is no $q_{i^*}' \in [\sfrac{1}{2},\sfrac{2}{3})$ in this process, then there exists a task $i \in \effortset$ with $p_i > \sfrac{1}{4}$, which contradicts the assumption. Let $\bar{\effortset}$ be the subset with probability $\bar{q}_{i^*} \in [\sfrac{1}{2},\sfrac{2}{3})$ during this process. It is easy to see that $\bar{\effortset}$ satisfies other two properties.

However, by union bound, 
\begin{align*}
\sum_{i\in\bar{\effortset}\backslash\{i^*\}} p_i \geq 1-\bar{q}_{i^*} > \frac{1}{3} 
\geq \frac{16}{3}\rbr{1-\frac{2\cost_{i^*}}{p_{i^*}}},
\end{align*}
which contradicts the assumption that $\bar{\effortset}$ can be incentivized according to the case 1. \qedhere
\end{enumerate}
\end{proof}

\begin{proof}[Proof of \cref{thm:approximation}]
We first prove the theorem for additive valuations, 
and then at the end we introduce the details for generalizing our techniques to submodular valuations. 
Recall that for any task $i$, we have $p_i\geq 2\cost_i$ since otherwise that task cannot be incentivized by the principal. 
Thus, we divide the tasks into two sets $X, Y$ based on the ratio $\sfrac{p}{2\cost_i}$ as follows
$$
X = \lbr{i: \frac{p_i}{2\cost_i} > 11}; 
\qquad Y =\lbr{i: 1 \leq \frac{p_i}{2\cost_i} \leq 11}.
$$
By \Cref{thm:budget inflation}, there is a truncated separate scoring rule with budget $1$ that is an $11$-approximation on the set~$X$ 
since this case can be viewed the same as the one in \cref{thm:budget inflation}
by scaling the score and the costs by the same constant factor~$11$. 

We divide the set $Y$ into three subsets. 
\begin{align*}
Y_1 = \lbr{i: p_i\geq \frac{1}{4},
1 \leq \frac{p_i}{2\cost_i} \leq \frac{16}{15}}; \quad
Y_2 =\lbr{i: p_i< \frac{1}{4}, 
1 \leq \frac{p_i}{2\cost_i} \leq \frac{16}{15}};\quad
Y_3 =\lbr{i: \frac{16}{15} < \frac{p_i}{2\cost_i} \leq 11}.
\end{align*}
Intuitively, set $Y_1$ corresponds to the case that the costs of effort are large, and it is sufficient to only incentivize one task with highest value in this set. 
Both set $Y_2$ and $Y_3$ corresponds to the situation where 
the probabilities of revealing the states are small compared to the costs, and hence the concentration technique cannot be applied.
In both cases, we utilize \cref{lem:asym-upper-bound} to bound the sum of probabilities for any set of incentivizable tasks, 
and hence showing that the set of tasks we identified by our polynomial time algorithm is approximately optimal.

\begin{enumerate}[{Case} 1:]
\item $Y_1 = \lbr{i: p_i\geq \frac{1}{4},
1 \leq \frac{p_i}{2\cost_i} \leq \frac{16}{15}}$. 
In this case, 
$\cost_i\geq \frac{15p_i}{32} \geq \frac{15}{128}$.
Therefore, at most $8$ tasks in $Y_1$ can be incentivized simultaneously in the optimal mechanism. 
By choosing the task in $Y_1$ with highest value, 
the principal attains an $8$-approximation by only incentivizing that task.

\item $Y_2 =\lbr{i: p_i< \frac{1}{4}, 
1 \leq \frac{p_i}{2\cost_i} \leq \frac{16}{15}}$. We use the threshold mechanism in \Cref{fig:threshold scoring mechanism}, with a recommendation set generated by running \Cref{fig:recommendation set static} on set $Y_2$.

We prove it is a $\frac{32}{3}$-approximation by showing: (1) the threshold scoring mechanism is incentive compatible (i.e.\ the agent's best response is to exert effort on all tasks in the recommendation set); and (2) the total value in the recommendation set $\effortset$ is a $16$-approximation of the optimal solution. 

\begin{enumerate}
    \item[(1)] The threshold scoring mechanism is incentive compatible. Specifically, we show that the set $\effortset_j$ can be incentivized for any task $j \in Y_2$.
For any task $j \in Y_2$, and any $i'\neq j, i'\in \effortset_{j}$, according to two constraints used in the construction of $\effortset_{j}$,
we have 
\begin{align*}
\sum_{i\in \effortset_{j}\backslash\{i'\}} p_i
= \sum_{i\in \effortset_{j}\backslash\{j\}} p_i 
- p_{i'} + p_{j}
\leq \rbr{1-\frac{2\cost_{i'}}{p_{i'}}}.
\end{align*}
Given the threshold scoring rule with threshold $\eta=1$ on effort set $\effortset_{j}$,
the expected score increase of exerting effort on task $i'$ is at least the probability of receiving no informative signal on tasks in $\effortset_{j}\backslash\{i'\}$
times the conditional score increase for exerting effort. By the union bound, we have the probability of receiving no informative signal on tasks in $\effortset_{j}\backslash\{i'\}$ is at least $\Pi_{i\in \effortset_{j}\backslash\{i'\}} (1-p_i) \geq 1-\sum_{i\in \effortset_{j}\backslash\{i'\}} p_i$. Conditioned on this event, the expected score increase for exerting effort on $i'$ is $p_{i'} + p_{i'}/2 - 1/2 = p_{i'}/2$. Thus, we have the expected score increase of exerting effort on task $i'$ is at least
\begin{align*}
\rbr{1-\sum_{i\in \effortset_{j}\backslash\{i'\}} p_i}\cdot \frac{p_{i'}}{2}
\geq \cost_{i'}.
\end{align*}
Therefore, for all searches $j \in Y_2$, a threshold scoring rule with threshold $1$ and recommendation set $\effortset_{j}$ is incentive compatible.

\item[(2)] The total value in the recommendation set $\effortset$ is a $16$-approximation of the optimal solution. 
By \cref{lem:asym-upper-bound}, 
for any set $\effortset'\subseteq Y_2$ that can be incentivized, and any $i^*\in\effortset'$, 
we have
\begin{align*}
\sum_{i\in \effortset'\backslash\lbr{i^*}} p_i \leq 
\frac{16}{3} \rbr{1-\frac{2\cost_{i^*}}{p_{i^*}}}.
\end{align*}

Let $\optset$ be the optimal effort set in the knapsack scoring problem when the set of available tasks is $Y_2$. 
Let $\hat{i}=\argmin_{i\in \optset}\rbr{ 1-\frac{2\cost_{i}}{p_{i}}+p_i}$ be the budget-pivotal task. This can be interpreted as a budget over the total probabilities in the optimal set $\optset$:
\begin{align*}
    \sum_{i\in \optset} p_i \leq
\frac{16}{3} \rbr{ 1-\frac{2\cost_{\hat{i}}}{p_{\hat{i}}}}+p_{\hat{i}}\leq \frac{16}{3} \rbr{ 1-\frac{2\cost_{\hat{i}}}{p_{\hat{i}}}+p_{\hat{i}}}.
\end{align*}


Suppose we are given an optimal set $\optset$. Divide it into two sets based on the probability.
\begin{align*}
    \optset_1=\lbr{i\in\optset\setminus\{\hat{i}\}: p_i>\rbr{ 1-\frac{2\cost_{\hat{i}}}{p_{\hat{i}}}}};\qquad \optset_2=\lbr{i\in\optset\setminus\{\hat{i}\}: p_i\leq\rbr{ 1-\frac{2\cost_{\hat{i}}}{p_{\hat{i}}}}}.
\end{align*}

For the set $\optset_1$, by Lemma \ref{lem:asym-upper-bound}, there are at most $\sfrac{16}{3}$ tasks in $\optset_1$. By picking the most valuable task among $\optset$, the set $\effortset'^j$ achieve a $\sfrac{16}{3}$-approximation to the value of $\optset_1$.



For the set $\optset_2$, we take the knapsack solution with a budget reduced by $\frac{16}{3}$ factor. 
By enumerating over the budget-pivotal task $\hat{i}$, the recommendation set in \Cref{fig:threshold scoring mechanism} provides a $\sfrac{32}{3}$-approximation to the value of $\optset_2$. 

\end{enumerate}


Combining the above two cases, we have
\begin{align*}
\rbr{\frac{16}{3}+\frac{32}{3}}\val(\effortset) \geq \val(\optset_1)+\val(\optset_2) = v(\optset),
\end{align*}
which implies the recommendation set $\effortset$ is a $16$-approximation to the value of $\optset$.

\item $Y_3 =\lbr{i: \frac{16}{15} < \frac{p_i}{2\cost_i} \leq 11}$.
In this case, for any set $\effortset\subseteq Y_3$ that can be incentivized, and any $i^*\in\effortset$, 
we have
\begin{align*}
\sum_{i\in \effortset\backslash\lbr{i^*}} p_i \leq 
\sum_{i\in \effortset\backslash\lbr{i^*}} 22 \cost_i
\leq 22 \leq 352 \rbr{1-\frac{2\cost_{i^*}}{p_{i^*}}}
\end{align*}
where the last inequality holds since 
$\frac{2\cost_{i^*}}{p_{i^*}}\leq\frac{15}{16}$.
By the same argument as case 2, the threshold mechanism is a $1056$-approximation to the optimal in the knapsack scoring problem when the set of available tasks is $Y_3$.
\end{enumerate}

Combining all cases, for additive valuations, the maximum between truncated separate scoring rule 
and threshold scoring rule is a
$1091$-approximation to the optimal value \opt,
and the parameters can be computed in polynomial time.
Finally, for submodular valuation, 
the only difference is that the greedy solution we adopted for finding the set of incentivizable tasks 
loses an additional approximation factor of $\sfrac{e}{(e-1)}$ in valuations \citep*{sviridenko2004note}. 
Note that this additional factor can be save if we don't require computational efficiency 
and brute force search for the optimal set that can be incentivized given our proposed scoring rule. 
\end{proof}

\section{Sequential Effort}
\label{sec:search}
In this section, we show that the value approximation results for the static effort model can be generalized 
when the effort choice is made sequentially by applying the same family of scoring rules.

In the sequential search model, the agent makes the effort choice on tasks sequentially with the order of his choice.
Our designed scoring rule is robust against the strategy the agent adopts for exerting efforts on the recommendation set
as long as the strategy is not obviously dominated.

\begin{theorem}\label{thm:sequential-approx}
The better of 
a truncated separate scoring rule
and a threshold scoring rule 
is a $561$-approximation to the optimal value of the knapsack scoring problem (\opt)
when the agent does not adopt obliviously dominated strategies.
Moreover, for additive values, the parameters of such mechanism can be computed in polynomial time, 
and for submodular values, 
there is a polynomial time algorithm for computing the parameters that loses an additional multiplicative factor of $\sfrac{e}{(e-1)}$ in approximation ratio. 
\end{theorem}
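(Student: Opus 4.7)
The plan is to reuse the partition of tasks and the two approximation mechanisms from \cref{thm:approximation}, and to verify that under the weaker assumption that the agent avoids obviously dominated strategies in the sense of \cref{def:obvious dominated}, the same scoring rules continue to capture a constant fraction of \opt{} when effort is chosen sequentially.

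First, I would reuse the four-way partition of the available tasks, $X$ together with $Y_1, Y_2, Y_3$, exactly as in the proof of \cref{thm:approximation}. On $X$ I would deploy the truncated separate scoring mechanism of \cref{fig:truncated scoring mechanism}; on $Y_1$ the principal simply pays for the single most valuable task; on $Y_2 \cup Y_3$ I would deploy the threshold scoring mechanism of \cref{fig:threshold scoring mechanism} with the recommendation set produced by \cref{fig:recommendation set static}.

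For the truncated separate scoring mechanism on $X$, I would argue that the inflated budget together with the concentration argument underlying \cref{thm:budget inflation} imply that, at every history the agent can reach sequentially, the conditional probability that the truncation to $[0, B]$ binds after one more signal is small. Consequently, the marginal utility of exerting effort on any untouched recommended task $i$ differs from its static counterpart by only a small slack, and that static marginal is strictly positive by construction. \cref{def:obvious dominated} then forbids the agent from stopping while any recommended task remains, so in every realization the agent exerts effort on every recommended task and the principal attains the full static value.

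The main obstacle lies in the threshold scoring mechanism on $Y_2 \cup Y_3$, because the agent is tempted to stop the first time she receives an informative signal. I would first verify that, conditional on zero informative signals so far and any recommended task $i$ still untouched, the marginal expected utility from adding $i$ is at least $\tfrac{p_i}{2} - \cost_i$: with probability $p_i$ the informative signal lifts the conditional expected score from $\tfrac{1}{2}$ to $1$, and this gain dominates $\cost_i$ because every $i \in Y_2 \cup Y_3$ satisfies $p_i \geq 2\cost_i$. \cref{def:obvious dominated} therefore compels the agent to continue until either an informative signal arrives or the recommendation set is exhausted. Writing $\effortset$ for the recommendation set and $\bar\effortset \subseteq \effortset$ for the random prefix actually processed under the agent's chosen ordering, the principal's expected value satisfies
\begin{align*}
\expec[v(\bar\effortset)]
\;=\; \sum_{i \in \effortset} v_i \prod_{j \text{ before } i}(1-p_j)
\;\geq\; \Big(\prod_{i \in \effortset}(1-p_i)\Big) \cdot v(\effortset).
\end{align*}
Because \cref{lem:asym-upper-bound} together with the construction in \cref{fig:recommendation set static} bounds $\sum_{i \in \effortset} p_i$ by a constant strictly less than $1$, the product $\prod (1-p_i)$ is bounded below by a constant, so the sequential value on $Y_2 \cup Y_3$ remains within a constant factor of the static value.

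Finally, I would combine the three per-subset guarantees with the static approximation ratios from \cref{thm:approximation} and tune the case constants to obtain the stated $561$-approximation; for submodular $v$ I would apply \cref{lem:submodular greedy} inside the recommendation-set construction exactly as in the static analysis, which introduces the additional $\sfrac{e}{e-1}$ factor. The most delicate step will be the threshold-scoring analysis above, because one must simultaneously rule out premature exits along every reachable history using the not-obviously-dominated condition and track how the multiplicative loss $\prod (1 - p_i)$ interacts with the $Y_2$ and $Y_3$ sub-cases without inflating the constants.
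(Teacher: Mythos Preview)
Your plan captures the right intuition but has two genuine gaps, and the paper in fact does \emph{not} reuse the static machinery verbatim.

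\medskip
\textbf{On $X$.} Your claim that ``in every realization the agent exerts effort on every recommended task'' is too strong. Along a history where the running untruncated sum has already exceeded the cap $B$, the marginal score from any further task is zero, so the not-obviously-dominated condition no longer forbids stopping. The concentration argument only shows that such histories occur with probability at most $\tfrac{1}{9}$; it does not make them impossible. The paper therefore only claims that the agent completes the whole recommendation set with ex ante probability at least $\tfrac{8}{9}$, yielding a $\tfrac{99}{8}$-approximation on $X$, not the full static value.

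\medskip
\textbf{On $Y_2\cup Y_3$.} Your key step is the lower bound $\prod_{i\in\effortset}(1-p_i)\geq \text{const}>0$, which you justify by ``\cref{lem:asym-upper-bound} together with the construction in \cref{fig:recommendation set static}.'' This fails on the static $Y_3$. \cref{lem:asym-upper-bound} requires $p_i\leq \tfrac14$ and $\tfrac{2c_i}{p_i}\geq \tfrac{15}{16}$, neither of which holds on $Y_3$. And the budget $1-\tfrac{2c_j}{p_j}+p_j$ in \cref{fig:recommendation set static} can exceed $1$ on $Y_3$ (take $p_j$ close to $1$ and $\tfrac{2c_j}{p_j}$ close to $\tfrac{1}{11}$); moreover individual $p_i$ on $Y_3$ can be arbitrarily close to $1$, so $\prod_{i\in\effortset}(1-p_i)$ is not bounded away from $0$ and your expected-value bound collapses.

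\medskip
\textbf{What the paper does instead.} The paper abandons the static $Y_1,Y_2,Y_3$ split and repartitions $Y$ solely by $p_i$: $Y_1=\{p_i\geq 0.1\}$ and $Y_2=\{p_i<0.1\}$. On $Y_1$ every cost is at least $\tfrac{1}{220}$, so a single-task threshold rule is a $440$-approximation. On $Y_2$ the paper uses a \emph{new} recommendation-set procedure (\cref{fig:threshold scoring mechanism sequential}): greedily add tasks by $v_i/p_i$ with an explicit cap $\sum_{i\in\effortset}p_i\leq 0.55$. Because every $p_i<0.1$, this cap is attainable and gives completion probability $\geq 0.45$; the cruder bound $\sum_i p_i\leq 22\sum_i c_i\leq 22$ on any incentivizable set (no \cref{lem:asym-upper-bound} needed) then yields a $49$-approximation on value, hence roughly a $109$-approximation on $Y_2$ after the $0.45$ loss. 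Summing $\tfrac{99}{8}+440+109$ gives the stated $561$. The essential new idea you are missing is to repartition by $p_i$ and to redesign the threshold recommendation set with a hard cap on $\sum p_i$ rather than trying to reuse the static construction.
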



\begin{proof}
Again, we first prove the theorem for additive valuations.
Similarly as \Cref{thm:approximation}, we divide the tasks into two sets $X, Y$ based on the ratio $\sfrac{p_i}{2\cost_i}$ as follows
$$
X = \lbr{i: \frac{p_i}{2\cost_i} > 11}; 
\qquad Y =\lbr{i: 1 \leq \frac{p_i}{2\cost_i} \leq 11}.
$$

On set $X$, the mechanism in \Cref{fig:truncated scoring mechanism} with budget $1$ achieves a $\frac{99}{8}$-approximation. Let the last assignment completed be $i$. 
By the same proof of \Cref{thm:budget inflation},  
for any task $i\in \effortset$, the probability that the scoring rule runs out of budget before the agent exerting effort on task $i$ can be bounded by $\frac{8}{9}$. 
Hence, when adopting strategies that are not obviously dominated, 
with ex ante probability at least $\frac{8}{9}$, 
the agent will stop after finishing all the tasks in the recommendation set. 
The same mechanism looses another $\frac{8}{9}$ factor in the approximation ratio.

On set $Y$, we divide the tasks into two sets by the probability $p_i$ of knowing the truth.

\begin{align*}
Y_1 = \lbr{i: p_i\geq 0.1}; \quad
Y_2 =\lbr{i: p_i< 0.1}.
\end{align*}

On set $Y_1$, it is sufficient to pick the highest-value task and post the threshold scoring rule. By the probability-cost ratio $\frac{p_i}{\cost_i}\leq 22$, each task has $\cost_i\geq \frac{1}{220}$. At most $440$ tasks can be incentivized in $Y_1$. Hence a $440$-approximation on $Y_1$.

\begin{figure}[t]
    \centering
    \fbox{
    \parbox{0.96\textwidth}{

        \textbf{Recommendation set $\effortset$} for threshold mechanism, with sequentially responding agent
        
        Input: ground set $\groundset$
        
        Output: set $\effortset$
        
        Greedily add tasks from $\groundset$ to $\effortset$, by value-probability ratio $\frac{\val_i}{p_i}$ with a budget $0.55$ on the total probabilities $\sum_{i\in \effortset}p_i$ of knowing the truth.
    }
    }
    \caption{Procedure for identifying approximately optimal recommendation set with sequentially responding agent.}
    \label{fig:threshold scoring mechanism sequential}
\end{figure}


        


On set $Y_2$, we use the scoring mechanism in \Cref{fig:threshold scoring mechanism sequential}.
We show this mechanism achieves a $109$-approximation, by showing when the adopted strategy is not obviously dominated: 
(1) with probability at least $0.45$, the agent completes all the tasks in the recommendation set; 
and 
(2) the total value in the set  is a $109$-approximation.

\begin{itemize}
    \item The agent completes tasks in recommendation set with probability at least $0.45$. 
    By union bound, the probability that the agent gets any informative signal is $1-\prod_i(1-p_i)\leq \sum_i p_i\leq 0.55$. For any order of completing the task, the agent gets no informative signal with probability at least $0.45$. The marginal gain of doing one more task is always positive, so the agent will finish the recommendation set with probability at least $0.45$. 
    \item The total value in the set is a $49$-approximation to the optimal.  All tasks in $Y_2$ has $p_i<0.1$, so by setting the budget at $0.55$, the total probabilities in $\effortset$ is at least the optimal knapsack value with budget $0.45$ on total probabilities. Since the probability-cost ratio $\frac{p_i}{c_i}\leq 22$, there is a budget on the total probabilities in any set that can be incentivized: $\sum_{i}p_i\leq 22$. Hence a $49$-approximation.
\end{itemize}
Combining the claims above, the better of the truncated scoring mechanism and the threshold scoring mechanism achieves a $561$-approximation when the agent is responding sequentially. 
\end{proof}

\bibliography{ref.bib}
\newpage
\appendix
\section{Probability Tools}
\label{apx:prob}

\begin{lemma}[\citealp{hoeffding1963probability}]
\label{lem:hoeffding}
Suppose $X_1, ..., X_n$ are independent random variables such that $X_i\in[a_i,b_i]$. Let $X=\sum_i X_i$. 
For any $\delta > 0$,
\begin{align*}
\prob{X - \expect{X} \geq \delta}
\leq \exp\left(-\frac{2\delta^2}{\sum_i (b_i-a_i)^2}\right).
\end{align*}
\end{lemma}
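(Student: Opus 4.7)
The plan is to prove Hoeffding's inequality via the standard Chernoff/moment-generating-function approach. First I would apply Markov's inequality to the exponentiated deviation: for any $t > 0$,
\begin{align*}
\prob{X - \expect{X} \geq \delta}
= \prob{e^{t(X-\expect{X})} \geq e^{t\delta}}
\leq e^{-t\delta}\, \expect{e^{t(X-\expect{X})}}.
\end{align*}
Since the $X_i$ are independent, so are the centered variables $Y_i := X_i - \expect{X_i}$, which factors the moment generating function as $\expect{e^{t(X-\expect{X})}} = \prod_i \expect{e^{tY_i}}$. The problem then reduces to bounding the MGF of each bounded centered variable $Y_i \in [a_i - \expect{X_i},\, b_i - \expect{X_i}]$.

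The main technical ingredient, and the step I would present as a self-contained sub-lemma, is Hoeffding's lemma: for any mean-zero random variable $Y$ supported in an interval $[\alpha,\beta]$,
\begin{align*}
\expect{e^{tY}} \leq \exp\!\left(\frac{t^2(\beta-\alpha)^2}{8}\right).
\end{align*}
To prove this I would use convexity of $y \mapsto e^{ty}$ to write, for $y\in[\alpha,\beta]$, $e^{ty} \leq \tfrac{\beta-y}{\beta-\alpha} e^{t\alpha} + \tfrac{y-\alpha}{\beta-\alpha} e^{t\beta}$. Taking expectations and using $\expect{Y}=0$ collapses this to a function $\varphi(u) := -\theta u + \log(1 - \theta + \theta e^{u})$ with $u := t(\beta-\alpha)$ and $\theta := -\alpha/(\beta-\alpha) \in [0,1]$. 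A direct calculation gives $\varphi(0)=\varphi'(0)=0$ and $\varphi''(u) \leq 1/4$ for all $u$, so by Taylor's theorem $\varphi(u) \leq u^2/8 = t^2(\beta-\alpha)^2/8$.

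Combining the pieces, $\expect{e^{t(X-\expect{X})}} \leq \exp\!\bigl(t^2 \sum_i (b_i-a_i)^2 / 8\bigr)$, so
\begin{align*}
\prob{X-\expect{X}\geq \delta} \leq \exp\!\left(-t\delta + \tfrac{t^2}{8}\sum_i(b_i-a_i)^2\right).
\end{align*}
The final step is to optimize over $t>0$. Differentiating the exponent and setting it to zero yields $t^\star = 4\delta/\sum_i(b_i-a_i)^2$, and substituting back produces the claimed bound $\exp\!\bigl(-2\delta^2/\sum_i(b_i-a_i)^2\bigr)$. The only non-routine step is the second-derivative bound $\varphi''(u)\leq 1/4$ inside Hoeffding's lemma; everything else is bookkeeping with Markov, independence, and a one-variable optimization.
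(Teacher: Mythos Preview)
Your proof is correct and is the standard Chernoff-method derivation of Hoeffding's inequality. The paper itself does not give a proof of this lemma; it simply states it as a known result attributed to \citet{hoeffding1963probability} in the Probability Tools appendix, so there is no paper-specific argument to compare against.
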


\begin{lemma}[\citealp{bernstein1927the}]
\label{lem:bernstein}
Suppose $X_1, ..., X_n$ are independent zero-mean random variables such that $|X_i|\leq M$. 
Let $X=\sum_i X_i$.
For any $\delta > 0$,
\begin{align*}
\prob{\abs{X} \geq \delta}
\leq 2\exp\left(-\frac{\frac{1}{2}\delta^2}{\sum_i \expect{X^2_i}+\frac{M}{3}}\right).
\end{align*}
\end{lemma}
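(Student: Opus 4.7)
The plan is to follow the classical Chernoff--Cram\'er (exponential moment) method, which is the standard route to Bernstein-type inequalities. First I would reduce the two-sided bound to a one-sided bound: since $-X_1,\dots,-X_n$ satisfy the same hypotheses as $X_1,\dots,X_n$ (they are independent, zero-mean, and bounded by $M$ in absolute value), it suffices to prove $\prob{X \geq \delta}$ is at most the asserted exponential, and then apply a union bound over $\{X \geq \delta\}$ and $\{-X \geq \delta\}$ to pick up the factor of $2$.

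The core step is the Chernoff argument. For any $t > 0$, Markov's inequality gives
\begin{align*}
\prob{X \geq \delta} \leq e^{-t\delta}\, \expect{e^{tX}} = e^{-t\delta}\prod_{i=1}^n \expect{e^{tX_i}},
\end{align*}
using independence. The heart of Bernstein is to bound $\expect{e^{tX_i}}$ via the Taylor series. Since $\expect{X_i}=0$ and $|X_i|\leq M$, for every $k\geq 2$ we have $|X_i^k|\leq M^{k-2} X_i^2$, so
\begin{align*}
\expect{e^{tX_i}} = 1 + \sum_{k\geq 2}\frac{t^k\expect{X_i^k}}{k!} \leq 1 + \expect{X_i^2}\sum_{k\geq 2}\frac{t^k M^{k-2}}{k!} = 1 + \frac{\expect{X_i^2}}{M^2}\bigl(e^{tM}-1-tM\bigr).
\end{align*}
Using $1+y\leq e^y$ and multiplying across $i$ yields $\log \expect{e^{tX}}\leq \sigma^2 M^{-2}(e^{tM}-1-tM)$ where $\sigma^2 \triangleq \sum_i \expect{X_i^2}$.

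The next step is the elementary inequality $e^{y}-1-y \leq y^2/(2(1-y/3))$ valid for $0\leq y<3$; applied with $y=tM$ and restricted to $tM < 3$ this gives
\begin{align*}
\log \expect{e^{tX}} \leq \frac{t^2 \sigma^2}{2(1 - tM/3)}.
\end{align*}
Combining with the Chernoff bound and choosing $t$ to minimize $-t\delta + t^2\sigma^2/(2(1-tM/3))$, the standard optimum is $t = \delta/(\sigma^2 + M\delta/3)$, which indeed satisfies $tM<3$; plugging in produces the Bernstein exponent $-\tfrac{1}{2}\delta^2/(\sigma^2 + M\delta/3)$. (The form in the statement with $M/3$ in the denominator is the standard Bernstein bound; I would present the derivation in the usual $M\delta/3$ form and match notation at the end.)

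The main technical obstacle, and really the only nontrivial one, is the bound on $e^y-1-y$ which turns an awkward series into the tidy $y^2/(2(1-y/3))$ expression responsible for the $M/3$ correction in the denominator; this is what distinguishes Bernstein from the coarser Hoeffding bound. Everything else is bookkeeping: independence to factor the MGF, the bound $|X_i^k|\leq M^{k-2} X_i^2$ to introduce $\expect{X_i^2}$, $1+y\leq e^y$ to additively combine logarithms, and a one-variable optimization over $t$. Once these ingredients are assembled and combined with the symmetric tail argument, the lemma follows.
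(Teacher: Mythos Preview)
The paper does not prove this lemma at all; it is stated in the appendix as a standard probability tool and attributed to \citet{bernstein1927the} without argument. Your Chernoff--Cram\'er sketch is the classical route and is correct for the standard Bernstein inequality.

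One substantive caveat: the derivation you outline produces the denominator $\sum_i \expect{X_i^2} + \tfrac{M\delta}{3}$, whereas the lemma as stated has $\sum_i \expect{X_i^2} + \tfrac{M}{3}$, with no $\delta$. You acknowledge this and say you would ``match notation at the end,'' but these are not equivalent forms in general: for $\delta>1$ the paper's version is strictly stronger than what the Chernoff method yields, and no notational adjustment will close that gap. The discrepancy is almost certainly a typo in the paper (the standard reference form has $M\delta/3$, and the paper's own application in the proof of \Cref{thm:budget inflation} is consistent with a regime where the distinction is harmless), but you should flag it explicitly rather than treat it as cosmetic.
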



\begin{lemma}[\citealp{pinsker1960information}]
\label{lem:pinsker}
If $P$ and $Q$ are two probability distributions on a measurable space $(X, \Sigma_{X})$, then for any measurable event $\event \in \Sigma_{X}$, it holds that
\[
\left| P(\event) - Q(\event) \right| \leq \sqrt{\frac{1}{2} \mathrm{KL}(P \| Q)},
\]
where 
\[ 
\mathrm{KL}(P \| Q) = \int_X \left(\ln \frac{\mathrm d P}{\mathrm d Q}\right) \mathrm d P
\]
is the Kullback--Leibler divergence.
\end{lemma}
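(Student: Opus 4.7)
The plan is to prove Pinsker's inequality by the standard two-step reduction: first establish the scalar (Bernoulli) case, and then lift to general measurable events via a data-processing argument.

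First, I would reduce to the binary case. Given the event $\event$, define the deterministic map $T: X \to \{0,1\}$ by $T(x) = \indicate{x \in \event}$. Under $P$, $T$ is Bernoulli with parameter $p := P(\event)$; under $Q$, it is Bernoulli with parameter $q := Q(\event)$. By the data-processing inequality for KL divergence (which in this concrete form is a one-line consequence of the log-sum inequality applied to the partition $\{\event, \event^c\}$), we have
\[
\KL(P \| Q) \;\geq\; p \ln \frac{p}{q} + (1-p) \ln \frac{1-p}{1-q}.
\]
Thus it suffices to prove the scalar inequality
\[
2(p-q)^2 \;\leq\; p \ln \frac{p}{q} + (1-p) \ln \frac{1-p}{1-q} \qquad \text{for all } p,q \in [0,1],
\]
since the left-hand side is exactly $2\setsize{P(\event) - Q(\event)}^2$ and taking square roots then gives the claim.

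Second, I would prove this scalar inequality by fixing $q$ and considering
\[
f(p) \;=\; p \ln \frac{p}{q} + (1-p)\ln \frac{1-p}{1-q} - 2(p-q)^2.
\]
A direct computation gives $f(q) = 0$ and $f'(q) = 0$, and
\[
f''(p) \;=\; \frac{1}{p(1-p)} - 4 \;\geq\; 0,
\]
since $p(1-p) \leq \tfrac{1}{4}$ on $[0,1]$. Hence $f$ is convex with a minimum at $p = q$, so $f(p) \geq 0$ everywhere, which is exactly the required binary Pinsker bound.

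The main obstacle is purely a bookkeeping one: verifying the data-processing step cleanly when $P$ and $Q$ are arbitrary measures (including the possibility that $P$ is not absolutely continuous with respect to $Q$, in which case $\KL(P\|Q) = +\infty$ and the inequality is trivial). Once one either assumes $P \ll Q$ or disposes of the degenerate case at the outset, the partition-based log-sum-inequality argument gives the reduction immediately, and the remaining convexity calculation is elementary.
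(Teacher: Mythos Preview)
Your proof is correct and is the standard argument for Pinsker's inequality. However, the paper does not prove this statement at all: \cref{lem:pinsker} is stated in the appendix on probability tools with a citation to \citet{pinsker1960information} and no proof is given, so there is nothing to compare your approach against.
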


\section{Properness for Belief Elicitation}
\label{apx:proper belief}
The main idea of converting any scoring rule that is potentially not proper for some beliefs to a scoring rule that is proper for all beliefs 
is to apply the taxation principle and let the agent chooses his best option given the original scoring rule. 
\begin{definition}\label{def:proper belief}
A scoring rule
$\score: \Delta(\outcomes)\times\outcomes\to [0,1]$ is \emph{proper for belief elicitation} if 
for any $\mu, \mu'\in \Delta(\outcomes)$, 
\begin{align*}
    \expect[\omega\sim\mu]{\score(\mu,\outcome)} \geq \expect[\omega\sim\mu]{\score(\mu',\outcome)}.
\end{align*}
\end{definition}
\begin{claim}\label{claim:proper belief}
For any proper scoring rule $\score$, there exists another scoring rule $\hat{\score}$ that is proper for belief elicitation such that 
\begin{align*}
\score(\signal,\outcome) = \hat{\score}(\mu(\signal),\outcome)
\end{align*}
for any $\signal\in\signals$ and $\outcome\in\outcomes$
where $\mu(\signal)$ is the posterior belief of the agent when receiving signal $\signal$.
\end{claim}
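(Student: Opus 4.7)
The plan is to invoke the taxation principle: interpret the original scoring rule $\score$ as a menu indexed by the signals in the support, and then extend it to all beliefs by letting each belief pick its preferred menu item.

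Concretely, I would define $\hat{\score}$ as follows. View $\score$ as specifying a menu $\mathcal{M} = \{\score(\signal,\cdot) : \signal\in\signals\}$ of ex post payoff vectors. For every belief $\mu\in\Delta(\outcomes)$, pick $\signal^*(\mu)\in\argmax_{\signal\in\signals}\expect[\omega\sim\mu]{\score(\signal,\omega)}$, where the tie-breaking rule prefers $\signal$ itself whenever $\mu=\mu(\signal)$ for some $\signal$ in the support. Then set
\begin{align*}
\hat{\score}(\mu,\omega) \;\triangleq\; \score(\signal^*(\mu),\omega).
\end{align*}
Since each $\hat{\score}(\mu,\omega)$ is one of the values $\score(\signal,\omega)\in[0,1]$, the range condition is automatic.

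Next I would verify the two required properties. For the pointwise identity $\score(\signal,\omega)=\hat{\score}(\mu(\signal),\omega)$: by \Cref{def:proper}, properness of $\score$ on the support gives $\expect[\omega\sim\mu(\signal)]{\score(\signal,\omega)}\geq \expect[\omega\sim\mu(\signal)]{\score(\signal',\omega)}$ for every $\signal'\in\signals$, so $\signal$ is itself a maximizer of the menu at the belief $\mu(\signal)$, and the tie-breaking rule picks $\signal^*(\mu(\signal))=\signal$. For properness for belief elicitation (\Cref{def:proper belief}), for any two beliefs $\mu,\mu'\in\Delta(\outcomes)$,
\begin{align*}
\expect[\omega\sim\mu]{\hat{\score}(\mu,\omega)} = \expect[\omega\sim\mu]{\score(\signal^*(\mu),\omega)} \geq \expect[\omega\sim\mu]{\score(\signal^*(\mu'),\omega)} = \expect[\omega\sim\mu]{\hat{\score}(\mu',\omega)},
\end{align*}
where the inequality is just the definition of $\signal^*(\mu)$ as an argmax over the menu.

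The only delicate step is the tie-breaking when several distinct signals in $\signals$ induce the same posterior: the claimed identity only pins down $\hat{\score}(\mu(\signal),\cdot)$ up to the choice of representative, and two signals with identical posteriors are necessarily interchangeable from the incentive standpoint (they yield the same expected score under the common posterior). Fixing any canonical representative per posterior makes the identity hold for that representative; because the agent's on-path expected score is unchanged under this conversion, the translation from $\score$ to $\hat{\score}$ preserves both incentive compatibility of the enclosing mechanism and the principal's objective, which is exactly the ``without performance loss'' content alluded to in \Cref{sec:prelim}.
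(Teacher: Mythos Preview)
Your proposal is correct and follows essentially the same approach as the paper: define $\hat{\score}(\mu,\omega)=\score(\signal^*(\mu),\omega)$ for $\signal^*(\mu)\in\argmax_{\signal}\expect[\omega\sim\mu]{\score(\signal,\omega)}$, then check properness for belief elicitation via the argmax definition and the extension property via properness of $\score$. Your explicit treatment of tie-breaking when multiple signals share a posterior is more careful than the paper's terse ``follows directly from properness,'' but the construction and verification are otherwise identical.
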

\begin{proof}
Consider the following scoring rule for belief elicitation:
\begin{align}\label{eq:best response}
    \hat{\score}(\mu, \outcome)=\score(\signal^*, \outcome), \text{ where }\signal^*\in\argmax_{\signal}\expect[\outcome\sim\mu]{\score(\sigma, \outcome)}.
\end{align}
Next, we will show that
\begin{enumerate}
\item $\hat{\score}$ is proper for belief elicitation. Let $\signal^*(\mu)=\argmax_{\signal}\expect[\outcome\sim\mu]{\score(\sigma, \outcome)}$ be the best responding signal when the agent has to choose a signal to report. 
For any belief $\mu$ and $\mu'$, we have
\begin{align*}
    \expect[\outcome\sim\mu]{\score(\mu', \outcome)}
    =\expect[\outcome\sim \mu]{\score(\signal^*(\mu'), \outcome)}
    \leq \expect[\outcome\sim\mu]{\score(\signal^*(\mu), \outcome)}
    =\expect[\outcome\sim\mu]{\score(\mu, \outcome)}
\end{align*} 
which implies that $\hat{\score}$ is proper for belief elicitation. 

\item $\hat{\score}$ is an extension of $\score$, i.e.\ $\score(\signal,\outcome) = \hat{\score}(\mu(\signal),\outcome)$. 
This follows directly from the properness of the original scoring rule $\score$. \qedhere
\end{enumerate}
\end{proof}

Note that given an arbitrary scoring rule $\score$, 
computing the best response strategy $\signal^*$ given his belief $\mu$ as in Equation \eqref{eq:best response} may be NP-hard. 
Therefore, even though such proper scoring rule exists, 
it might be challenging to provide its exact form in polynomial time given our designed scoring rules. 
Fortunately, for our purpose of incentivizing effort, 
we can adopt a similar solution concept in our sequential effort model (c.f., \cref{def:obvious dominated} and \cref{sec:search}) by allowing the agent to approximately best response to the scoring rule. 
More specifically, given any scoring rule $\score$ for eliciting the signals, 
the principal can offer this original scoring rule $\score$ to the agent, 
ask the agent to report his belief, and let the agent choose the best possible signal he can find in polynomial time as input to the scoring rule $\score$ for computing his score based on his belief.
This protocol disentangles the incentives between reporting beliefs and maximizing the expected score, 
and hence the agent has no incentive to misreport his true belief. 
Moreover, since the scoring rule is proper for all signals in the support, 
for any belief induced by those signals, the agent's best response is to simply report those signals truthfully. 
For any belief that cannot be induced by those signals, 
the agent can adopt any polynomial time algorithm for finding an approximately optimal solution. 
However, as those events happen with probability measure 0, 
it would not affect the agent's incentives for exerting effort in our model, 
and all of our results extend naturally.

\section{Missing Proofs and Constructions}
\subsection{Alternative Formulation of Threshold Scoring Rules}
\label{apx:exp proper for belief}
Here we present an alternative formulation of the threshold scoring rule in the special case of threshold $1$
given outcome space $\outcomes=\{0, 1\}^n$.

\begin{definition}[\citealp*{HLSW-20}]\label{def:optimal mos}
Consider the $n$-dimensional outcome space $\outcomes=\{0, 1\}^n$.
Given single-dimensional scoring rules
\begin{align*}
\score_i(\mu_i, \outcome_i)=\begin{cases}
1& \mu_i\leq \sfrac{1}{2} \text{ and } \outcome_i=0, \text{ or } \mu_i> \sfrac{1}{2} \text{ and } \outcome_i=1,\\
0& \text{otherwise}.
\end{cases}
\end{align*}
The canonical \mos scoring rule $\score$ 
is defined as 
\begin{align*}
    \score(\mu, \outcome)=\score_i(\mu_i, \outcome_i), \text{ where }i=\argmax\expect[\outcome_i\sim\mu_i]{\score_i(\mu_i, \outcome_i)},
\end{align*}
\end{definition}

By \citet*{HLSW-20}, the canonical \mos scoring rule is proper for belief elicitation. 
Moreover, it is easy to verify that it coincides with the threshold scoring rule with threshold $1$ given any belief of the agent.

\subsection{Proof of \cref{thm:budget inflation}}
\label{apx:proof budget inflation}

We show that the mechanism in \Cref{fig:truncated scoring mechanism}  is incentive compatible,
by first showing that scoring rule $S$ is proper, 
and then showing that $\effortset$ is the agent's best effort choice. 

\paragraph{Proper.} 
For each task $i$, conditional on receiving signal $\signal_i\neq\bot$,
the score $\score_i(\signal_i,\outcome_i)$ first order stochastically dominates $\score_i(\signal'_i,\outcome_i)$ for any $\signal'_i$.
Thus, the agent has incentives to truthfully report the signal $\signal_i$ if $\signal_i\neq\bot$.

We then show that the agent has no incentives to misreport on tasks with uninformative signal $\signal_i = \bot$ by contradiction.
Suppose that the agent has incentives to misreport given signal $\bot$ on some tasks. 
We partition the tasks into three sets. 
Let $Z_0$ be the set of tasks $i$ such that $\signal_i\neq \bot$, 
$Z_1$ be the set of tasks $i$ such that $\signal_i= \bot$ and where the agent truthfully reports the signal,
and $Z_2$ be the set of tasks $i$ such that $\signal_i= \bot$ and where the agent misreports the signal. 
First note that if 
$\sum_{i\in Z_0} 2\cdot\sfrac{9c_i}{8p_i} \geq 11$,
then by truthful reporting the signals the agent can secure a deterministic score $11$, 
which is the maximum possible score. 
Hence the agent has no incentive to misreport in this case.

Next we focus on the case when $\sum_{i\in Z_0} \sfrac{9c_i}{8p_i} < \sfrac{11}{2}$.
Let $\eta_i$ be a Bernoulli random variable with probability $\sfrac{1}{2}$ drawn independently for each task $i\in Z_2$. We use $\eta_i$ to indicate whether the agent guesses correctly on the task $i \in Z_2$. Let 
\begin{align*}
s = \sum_{i\in Z_0} \frac{9c_i}{8p_i} + \sum_{i\in Z_2} \frac{9c_i}{4p_i}\left(\eta_i - \frac{1}{2}\right) + \frac{11}{2}.
\end{align*}
Note that $s$ is the random variable corresponding to the score without truncation by the interval $[0,11]$.
Consider an alternative setting where the score is truncated by the interval $[\sum_{i\in Z_0} \sfrac{9c_i}{4p_i}, 11]$. 
Since the distribution of $s$ is symmetric with respect to the mean $\rbr{\sum_{i\in Z_0} \sfrac{9c_i}{4p_i} + 11}/2$, the score distribution under the truncation by $[\sum_{i\in Z_0} \sfrac{9c_i}{4p_i}, 11]$ is also symmetric with respect to the mean.
Thus, the utility of the agent for misreporting in this alternative setting is exactly the same as the utility for truthful reporting, $\rbr{\sum_{i\in Z_0} \sfrac{9c_i}{4p_i} + 11}/2$. 
Since $\sum_{i\in Z_0} \sfrac{9c_i}{4p_i} > 0$, the utility of the agent for misreporting with truncation by $[0,11]$ is strictly less than the utility for misreporting with truncation by $[\sum_{i\in Z_0} \sfrac{9c_i}{4p_i}, 11]$.  
Therefore, the agent will not have an incentive to misreport in the original setting when the lower truncation is $0$.

\paragraph{Effort Set.}
We prove that the agent's optimal choice is to exert effort in tasks $\effortset$.
First note that we set the score to be zero for $i\not\in\effortset$. 
This immediately implies that the agent will not exert effort on task $i\not\in\effortset$. 
Fix the agent's effort choice in $\effortset$.
Suppose there exists a task $\hat{i}\in \effortset$ such that the agent's effort on task $\hat{i}$ is $0$.
Let~$\hat{\event}_{\hat{i}}$ be the event that $-d+\sum_{i\in\effortset \backslash \{\hat{i}\}} \score_{i}(\signal_{i},\outcome_{i}) \in [0,11-\sfrac{9c_{\hat{i}}}{8p_{\hat{i}}}]$. 
Let $\hat{Z}\subseteq\effortset$ be the set on which the agent exerts effort.
Therefore, 
\begin{align*}
\prob{\hat{\event}_{\hat{i}}} &= 1-
\prob{-d+\sum_{i\in\effortset \backslash \{\hat{i}\}} \score_{i}(\signal_{i},\outcome_{i}) > 11-\frac{9c_{\hat{i}}}{8p_{\hat{i}}}}\\
&= 1-
\prob{\sum_{i\in\hat{Z}} \indicate{\signal_i \neq \bot} \cdot \frac{9c_i}{8p_i} > \frac{11}{2}-\frac{9c_{\hat{i}}}{4p_{\hat{i}}}}\\
&\geq 1-\exp\rbr{-\frac{\frac{1}{2}\rbr{\frac{11}{2}-\frac{9c_{\hat{i}}}{4p_{\hat{i}}} 
- \sum_{i\in\hat{Z}}\frac{9 c_i}{8}}^2}
{\frac{1}{4}\sum_{i\in\hat{Z}} \frac{1}{p_i}\cdot \frac{9c_i}{4}^2
+ \frac{1}{6}\max_{i\in\hat{Z}} \frac{9c_i}{4p_i}}} \\
&\geq 1-\exp\rbr{-\frac{\rbr{11 - \frac{45}{8}}^2}{6\cdot \frac{9}{8}^2+\frac{3}{2}}} \geq \frac{8}{9},
\end{align*}
where the first inequality holds by applying Bernstein's inequality (\cref{lem:bernstein}).
The second inequality holds since (1) $\sum_{i\in\hat{Z}} \frac{9c_i}{4} = \frac{9}{4} \sum_{i\in\hat{Z}} c_i \leq \frac{27}{8}$; 
(2) $\sum_{i\in\hat{Z}} \sum_{i\in\hat{Z}} \frac{1}{p_i}\cdot \frac{9c_i}{4}^2 \leq \sum_{i\in\hat{Z}} 2(9/8)^2 c_i \leq 3(9/8)^2$; 
and (3) $\max_{i\in\hat{Z}} \sfrac{9c_i}{4p_i} \leq 9/8$.
Hence, by exerting effort on task $\hat{i}$, 
the score of the agent increases by at least $\prob{\hat{\event}_{\hat{i}}} \cdot \sfrac{9 c_{\hat{i}}}{8}  \geq c_{\hat{i}}$, which provides
a contradiction. 


For submodular values, we lose a $\sfrac{e}{e-1}$ factor in the value approximation ratio by computing the recommendation set $\effortset$ in polynomial time (\cref{lem:submodular greedy}). Without computation constraints, we have a scoring rule that achieves the theoretical bound. 

\section{General Information Structure}
\label{sec:general info}

In this section, we consider the problem of incentivizing effort with general information structures
and illustrate the intrinsic challenges for generalizing our results to general information structures.
Here, when the agent exerts effort, 
instead of assuming that the he observes the true state $\outcome_i$ with probability $p_i$ as in previous sections,
the agent receives a signal $\sigma_i \in \signals$ given by a signal structure that 
induces a distribution 
$\distoverposterior_i$ over posterior $\posterior_i\in \Delta(\outcomes)$. We show that the optimal value of the knapsack scoring problem can differ a lot under two different information structures
even if the optimal scoring rules for the single task problems are the same given those two information structures. 
Therefore, new ideas for designing approximately optimal scoring rules are required for general information structures. 

First, the following lemma characterizes whether a single task can be incentivized by an incentive compatible mechanism
under general information structure environments. 
\begin{lemma}[\citealp{HLSW-20}]
\label{lem:single opt-general}
For the knapsack scoring problem with general information structures, the agent can be incentivized to exert effort on a single task $\effortset = \{i\}$ with budget $1$ 
if and only if 
$$
\expect[\posterior_i\sim\distoverposterior_i]{\abs{\posterior_i-\prior}}\geq c_i,
$$ 
where $\abs{\posterior_i-\prior}$ is the difference of the mean between the posterior and the prior.
\end{lemma}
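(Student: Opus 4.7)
The plan is to recast the single-task incentive problem in terms of the convex value function induced by a proper scoring rule on the binary belief simplex $[0,1]$, and then use the uniform-prior structure to match that function against $\abs{\posterior_i-\prior}$. Since the agent exerts no effort on tasks $j\neq i$, the states $\outcome_j$ remain uniform and independent, so without loss of generality the scoring rule depends only on $(\signal_i,\outcome_i)$; concretely, for any such proper $\score:\signals\times\outcomes_i\to[0,1]$ I would define $V(\posterior)\triangleq\sup_{\signal'\in\signals}\expect[\outcome\sim\posterior]{\score(\signal',\outcome)}$. This is the pointwise supremum of the affine maps $\ell_{\signal'}(\posterior)=(1-\posterior)\score(\signal',0)+\posterior\,\score(\signal',1)$, hence convex on $[0,1]$. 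By properness, the realized signal $\signal_i$ (at posterior $\posterior_i$) and the signal $\bot$ (at posterior $\prior$) are optimal reports at their respective posteriors, so the agent's expected score is $V(\posterior_i)$ with effort and $V(\prior)$ without; the incentive inequality of \Cref{def:ic} for $\effortset=\{i\}$ therefore collapses to $\expect[\posterior_i\sim\distoverposterior_i]{V(\posterior_i)}-V(\prior)\geq\cost_i$.

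For necessity, I would use the ex-post bound $\score\in[0,1]$, which forces both endpoints $\ell_{\signal'}(0),\ell_{\signal'}(1)\in[0,1]$ of every affine map $\ell_{\signal'}$, so every slope lies in $[-1,1]$; the pointwise supremum of $1$-Lipschitz affine maps is itself $1$-Lipschitz, so $V$ is $1$-Lipschitz on $[0,1]$. Combined with the rewritten incentive constraint,
\begin{align*}
\cost_i \;\leq\; \expect[\posterior_i\sim\distoverposterior_i]{V(\posterior_i)-V(\prior)} \;\leq\; \expect[\posterior_i\sim\distoverposterior_i]{\abs{\posterior_i-\prior}}.
\end{align*}

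For sufficiency, I would exhibit the zero-one ``most-likely-outcome'' scoring rule which maps each signal $\signal$ to the mode of its induced posterior and returns $1$ if $\outcome$ equals that mode and $0$ otherwise. This rule is proper and ex-post in $\{0,1\}$, and under the uniform prior it realizes $V(\posterior)=\max(\posterior,1-\posterior)=\tfrac12+\abs{\posterior-\tfrac12}$, so $\expect[\posterior_i\sim\distoverposterior_i]{V(\posterior_i)}-V(\prior)=\expect[\posterior_i\sim\distoverposterior_i]{\abs{\posterior_i-\prior}}$, and the hypothesis is exactly the required incentive inequality. The main delicacy is that the ex-post bound on $\score$ is imposed \emph{a priori} only at realized signals; I would handle this by working with the supremum-of-affine-maps function $V$ directly, which forces the $1$-Lipschitz property globally on $[0,1]$ without needing any further extension of the scoring rule (and if one prefers, \Cref{claim:proper belief} gives such an extension without loss).
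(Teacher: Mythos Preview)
The paper does not prove this lemma; it is quoted as a result of \citet*{HLSW-20}. Your self-contained argument is correct and is essentially the standard characterization from that reference: recast the problem via the convex indirect-utility function $V(\posterior)=\sup_{\signal'}\expect[\outcome\sim\posterior]{\score(\signal',\outcome)}$ on the one-dimensional belief simplex, observe that the budget constraint $\score\in[0,1]$ forces every supporting affine map to have slope in $[-1,1]$ so that $V$ is $1$-Lipschitz (necessity), and exhibit the V-shaped rule $V(\posterior)=\tfrac12+\abs{\posterior-\tfrac12}$ realizing the bound (sufficiency).

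Two minor remarks. First, the ``delicacy'' you flag is not actually present in this paper's formulation: the constraint $\score(\signal,\outcome)\in[0,1]$ is imposed for all $\signal\in\signals$ and $\outcome\in\outcomes$, not merely at realized signals, so every affine map $\ell_{\signal'}$ automatically has both endpoints in $[0,1]$ and your Lipschitz argument needs no caveat. Second, your reduction ``without loss of generality the scoring rule depends only on $(\signal_i,\outcome_i)$'' is convenient for sufficiency but is not needed for necessity: working with the full scoring rule, the maps $\ell_{\signal'}(\posterior_i)=\expect[\outcome_i\sim\posterior_i,\,\outcome_{-i}\sim\text{unif}]{\score(\signal',\outcome)}$ are still affine with endpoints in $[0,1]$, and properness at the realized signals $(\signal_i,\bot_{-i})$ and $(\bot,\bot_{-i})$ gives the identities $V(\posterior_i)$ and $V(\tfrac12)$ directly. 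For sufficiency you should also note (trivially) that your mode-scoring rule ignores dimensions $j\neq i$, so exerting effort there is never beneficial and the full incentive constraint of \Cref{def:ic} over all $\effortset'\subseteq[n]$, not just $\effortset'=\emptyset$, is satisfied.
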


When there are multiple tasks, 
a crucial statistic that affects the set of the incentivizable tasks 
is the expected \KL-divergence between the prior and the posterior.
Specifically, let 
\begin{align*}
\divergence_i \triangleq \expect[\posterior_i\sim\distoverposterior_i]{\KL(\prior\| \posterior_i)}
\end{align*}
where $\KL(\prior \| \posterior_i) = \sum_{\outcome\in\outcomes} \prior(\outcome)\cdot 
\ln \frac{\prior(\outcome)}{\posterior_i(\outcome)}$
is the \KL-divergence between the prior $\prior$ and the posterior~$\posterior_i$.
This distance measures how easy for the agent to mimic the signal distributions without exerting effort. 
The following lemma provides an upper bound on the set of incentivizable tasks given asymmetric and general information structures.

\begin{lemma}\label{lem:general info upper}
For the knapsack scoring problem with general information structures, for any set $\effortset^*$ such that there exists an incentive compatible mechanism 
where the agent's optimal effort choice is~$\effortset^*$, 
we have 
\begin{align*}
\sum_{i\in\optset}\cost_i \leq 
\sqrt{\frac{1}{2} \sum_{i\in\optset} \divergence_i}.
\end{align*}
\end{lemma}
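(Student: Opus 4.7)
The plan is to compare the agent's equilibrium utility under effort set $\optset$ against a specific deviation — exerting no effort at all — and to bound the resulting inequality by the total variation between posterior and prior, then upgrade to KL via Pinsker and Jensen. Throughout, I would work with the extension $\hat{\score}$ that is proper for belief elicitation (\cref{claim:proper belief}), since this does not change expected scores on truthfully-reported signals but lets me index reports by beliefs $\mu \in \Delta(\outcomes)$ directly.

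First I would write down what the IC constraint gives for the deviation to $\effortset' = \emptyset$, together with properness. If the agent exerts effort on $\optset$ and receives signals inducing posterior $\mu = \prod_{i\in\optset} \mu_i \times \prod_{i\notin\optset} \prior$, truthful reporting yields expected score $\expect[\omega\sim\mu]{\hat\score(\mu,\omega)}$. If instead the agent exerts no effort — so the belief is the prior $\prior^n$ — properness of $\hat\score$ for belief elicitation implies that the best report from prior is $\prior^n$ itself, but also that $\expect[\omega\sim \prior^n]{\hat\score(\prior^n,\omega)} \ge \expect[\omega\sim \prior^n]{\hat\score(\mu,\omega)}$ for every $\mu$ in the support of $f_{\optset}$. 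Combining the IC inequality with this consequence of properness yields
\begin{align*}
\sum_{i\in\optset}\cost_i
\;\le\; \expect[\mu\sim f_{\optset}]{\expect[\omega\sim\mu]{\hat\score(\mu,\omega)} - \expect[\omega\sim\prior^n]{\hat\score(\mu,\omega)}}.
\end{align*}

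Next I would bound the inner difference. Because $\hat\score(\mu,\cdot)\in[0,1]$, the difference of expectations over two distributions on $\outcomes$ is at most their total variation distance, so the right-hand side is at most $\expect[\mu\sim f_{\optset}]{\text{TV}(\mu,\prior^n)}$. Pinsker's inequality (\cref{lem:pinsker}) then gives $\text{TV}(\mu,\prior^n) \le \sqrt{\tfrac{1}{2}\,\KL(\prior^n\|\mu)}$, and since $\prior^n$ and $\mu$ factor across tasks with the factors for $i\notin\optset$ identical, the chain rule for KL-divergence over independent product distributions yields $\KL(\prior^n\|\mu) = \sum_{i\in\optset}\KL(\prior\|\mu_i)$.

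Finally I would apply Jensen's inequality to the concave function $\sqrt{\cdot}$ to pull the expectation inside the square root:
\begin{align*}
\expect[\mu\sim f_{\optset}]{\sqrt{\tfrac{1}{2}\KL(\prior^n\|\mu)}}
\;\le\; \sqrt{\tfrac{1}{2}\sum_{i\in\optset}\expect[\mu_i\sim f_i]{\KL(\prior\|\mu_i)}}
\;=\;\sqrt{\tfrac{1}{2}\sum_{i\in\optset}\divergence_i},
\end{align*}
which combined with the previous chain gives the claimed bound. The main subtlety I anticipate is carefully justifying that the ``no-effort'' benchmark and properness together dominate \emph{all} deviations available to the agent (including randomized reporting after no effort), so that the IC-versus-$\emptyset$ inequality above is actually implied by incentive compatibility; this is why I reduce to the belief-elicitation form first, since it makes the prior-as-report comparison immediate.
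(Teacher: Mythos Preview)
Your proof is correct, and it takes a genuinely different route from the paper's.

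The paper bounds the cost by comparing two \emph{joint} distributions on $(\text{report},\text{state})$: it has the no-effort agent \emph{simulate} the posterior distribution on $\optset$ and report the simulated posterior, so the marginal over reports matches the effort case and the only discrepancy is in the report--state correlation. Pinsker is then applied once to this joint, and the product structure immediately gives $\KL(P\|Q)=\sum_{i\in\optset}\divergence_i$; no Jensen is needed.

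You instead take the more direct deviation of reporting the prior, use properness of $\hat\score$ to replace the no-effort benchmark by $\expect[\omega\sim\prior^n]{\hat\score(\mu,\omega)}$, and then apply Pinsker pointwise to the pair $(\mu,\prior^n)$ of \emph{state} distributions for each realized posterior $\mu$; the extra Jensen step is exactly the price of applying Pinsker inside the expectation over $\mu$ rather than once on the joint. Your route is arguably more elementary---it needs only the obvious no-effort deviation and makes the use of properness explicit---while the paper's simulation trick is slicker, shaving off the Jensen step. Both arrive at the identical bound, and your anticipated subtlety (randomized reporting after no effort) is indeed handled by passing to the belief-elicitation extension, just as you argue.
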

\begin{proof}
Note that given any proper scoring rule $\score$, 
one feasible choice of the agent is to exert no effort, 
simulate the posterior distribution on set $\optset$, 
and report the simulated posterior to the principal. 
Let $P$ be the distribution over the profile of reports, and states for all tasks in $\optset$
given the simulations on $\effortset^*$. Let $Q$ be such distribution
when the agent exerts effort on all tasks in~$\optset$
and get the true informative signals. 
It is easy to verify that the \KL-divergence between
$P$ and $Q$ is $\sum_{i\in\optset} \divergence_i$. 
Let $\event$ be the event such that the profile of reports and states does not coincide 
given the true posterior generating process and the simulated reports.
Then we have 
\begin{align*}
&\expect[\sigma\sim\effortset]{\expect[\omega\sim\signal]{\score(\signal,\outcome)}} 
- \expect[\sigma\sim\emptyset]{\expect[\omega\sim\signal]{\score(\signal,\outcome)}} 
\leq \expect[Q]{\score(\signal,\outcome)} 
- \expect[P]{\score(\signal,\outcome)} \\
& \leq \abs{\prob[P]{\event} - \prob[Q]{\event}} 
\leq \sqrt{\frac{1}{2}\KL(P \| Q)}
= \sqrt{\frac{1}{2} \sum_{i\in\optset} \divergence_i}
\end{align*}
where the second inequality holds since the payment of the principal is at most 1, 
and the third inequality holds by Pinsker's inequality (\cref{lem:pinsker}).
\end{proof}

Next we show that given two different information structures such that 
the design of the optimal scoring rule for both cases are the same 
in the single task problem, 
the set of incentivizable tasks may differ a lot when there are multiple tasks. 

Specifically, consider the symmetric environment with identical information structures and costs~$\cost$ for all tasks, 
\cref{lem:general info upper} implies that 
$\abs{\optset} \leq \frac{\divergence}{2\cost^2}$. 
Fixing $p>0$, consider the following two information structures when the agent exerts effort on any single task:
\begin{itemize}
    \item the agent receives an informative signal $\signal = \outcome$ with probability $p$,
    and receives an uninformative signal $\signal=\bot$ regardless of the realized state with probability $1-p$; 
    \item the agent receives an informative signal that induces posterior $\posterior = \frac{1+p}{2}$ and $\frac{1-p}{2}$ with probability $\frac{1}{2}$ each.
\end{itemize}

Given both information structures above, in the single task problem, by \cref{lem:single opt-general}, we know that the agent can be incentivized to exert effort on the single task
if and only if the cost of effort is at most $\sfrac{p}{2}$.

For the multi-task problem, suppose that the cost of effort on a single task is $\cost = \frac{p}{4}$. 
Given the first information structure, it is easy to show that the optimal scoring rule can incentivize the agent to exert effort on $O(\frac{1}{c})$ tasks.
By \cref{thm:approximation},
the agent can be incentivized to exert effort on $O(\frac{1}{c})$ tasks by the threshold scoring rule. 
In contrast, given the second information structure, 
we have that $\divergence = O(p^2)$
and hence by \cref{lem:general info upper}, 
the size of the incentivizable tasks is at most 
$\frac{\divergence}{2\cost^2} = O(1)$. 
The gap on the size of the incentivizable tasks between two different information structures are unbounded when $p$ and $c$ are sufficiently small.

The above observation indicates that the design of the (approximately) optimal scoring rules depends on the fine details of different information structures
even if they have the same performance on the single task problem. 
Thus it is unlikely to directly generalize our results for the special case to general information structures,
or derive a unified approach for reducing the multi-task knapsack scoring problems to single-task ones.
It is an interesting open question to identify tight upper bounds of the optimal solution for the knapsack scoring problem with general information structures, 
and design approximately optimal scoring rules to approximate the upper bound. 


\end{document}